\theoremstyle{thmstyleone}%
\newtheorem{theorem}{Theorem}%  meant for continuous numbers
\newtheorem{proposition}[theorem]{Proposition}% 
\theoremstyle{thmstyletwo}%
\theoremstyle{thmstylethree}%
\begin{document}

\title[Similarity matrix average for aggregating multiplex networks]{Similarity matrix average for aggregating multiplex networks}

%%=============================================================%%
%% Prefix	-> \pfx{Dr}
%% GivenName	-> \fnm{Joergen W.}
%% Particle	-> \spfx{van der} -> surname prefix
%% FamilyName	-> \sur{Ploeg}
%% Suffix	-> \sfx{IV}
%% NatureName	-> \tanm{Poet Laureate} -> Title after name
%% Degrees	-> \dgr{MSc, PhD}
%% \author*[1,2]{\pfx{Dr} \fnm{Joergen W.} \spfx{van der} \sur{Ploeg} \sfx{IV} \tanm{Poet Laureate} 
%%                 \dgr{MSc, PhD}}\email{iauthor@gmail.com}
%%=============================================================%%

\author*[1,2]{\fnm{Federica} \sur{Baccini}}\email{federica.baccini@phd.unipi.it}

\author[3]{\fnm{Lucio} \sur{Barabesi}}\email{lucio.barabesi@unisi.it}

\author[3]{\fnm{Eugenio} \sur{Petrovich}}\email{eugenio.petrovich@unisi.it}
%\equalcont{These authors contributed equally to this work.}

%\equalcont{These authors contributed equally to this work.}

\affil*[1]{\orgdiv{Department of Computer Science}, \orgname{University of Pisa}, \state{} \orgaddress{\street{Largo Bruno Pontecorvo, 3}, \city{Pisa}, \postcode{56127}, \country{Italy}}}

\affil[2]{\orgdiv{Institute for Informatics and Telematics}, \orgname{CNR},\state{} \orgaddress{\street{Via Giuseppe Moruzzi, 1}, \city{Pisa}, \postcode{56124}, \country{Italy}}}

\affil[3]{\orgdiv{Department of Economics and Statistics}, \orgname{University of Siena}, \orgaddress{\street{Piazza San Francesco, 7}, \city{Siena}, \postcode{53100},\state{} \country{Italy}}}

%%==================================%%
%% sample for unstructured abstract %%
%%==================================%%
\abstract{We introduce a methodology based on averaging similarity matrices with the aim of integrating the layers of a multiplex network into a single monoplex network. Multiplex networks are adopted for modelling a wide variety of real-world frameworks, such as multi-type relations in social, economic and biological structures. More specifically, multiplex networks are used when relations of different nature (layers) arise between a set of elements from a given population (nodes).  A possible approach for investigating multiplex networks consists in aggregating the different layers in a single network (monoplex) which is a valid representation -- in some sense -- of all the layers. In order to obtain such an aggregated network, we propose a theoretical approach -- along with its practical implementation -- which stems on the concept of similarity matrix average. This methodology is finally applied to a multiplex similarity network of statistical journals, where the three considered layers express the similarity of the journals based on co-citations, common authors and common editors, respectively.}

\keywords{Multiplex network, Similarity matrix, Jaccard coefficient, Cosine similarity, SimRank, Fréchet mean, Statistical journal network.}

\maketitle

\section{Introduction} \label{sec::intro}
Multilayer networks constitute an increasing active research
topic with applications in many different disciplines, such as social or biological sciences. A multilayer network is a collection of individual
networks -- referred to as layers -- each containing its own nodes and edges,
with additional edges between the various layers. For more details on
multilayer networks, see the recent monographs by \cite{bianconi2018multilayer}, \cite{dedomenico2022multilayer} and \cite{dickison2016multilayer}, or the survey paper by \cite{kivela2014multilayer}.

A noteworthy special case of the multilayer network is the so-called multiplex
network, which is characterized by the same set of nodes in each layer (see e.g. \cite{newman2018networks}, Section 6.7). Indeed, in
multiplex networks a unique node type is present, even if different edge
types may occur. In such a case, the linking edges between the various layers trivially connect replicas of the same node in the different layers, even
if such edges (interlayers) are omitted in practice for simplicity.
Characteristic examples of multiplex networks are social networks, where
the nodes are the individuals in a well-defined community with different
types of relational ties between them (such as friendship, family or
co-worker connections). Each type of relation is represented as a separate layer -- see \cite{bianconi2018multilayer} for real examples of social, on-line social, economic and financial
multiplex networks. A multiplex network illustration in the field of scientometrics
is provided by \cite{baccini2022similarity}. Actually, these authors consider multiplex networks of journals in different disciplines constituted by three layers based on co-citations, common
authors and common editors (see also \cite{baccini2020intellectual}). A further
instance of multiplex network modelling can be found in the field of epigenetics, as shown by \cite{baccini_graph-based_2022}. Indeed,  \cite{baccini_graph-based_2022} consider a multiplex network of blood cell phenotypes, where the layers are based on different epigenetic modifications.

A tool for investigating the structure of multiplex networks is based on the
``aggregation'' (in some appropriate way) of the different layers in order to
obtain a suitable monoplex network, i.e. a single layer. The resulting monoplex network is
useful to interpret the strong connections between nodes and -- eventually --
to implement a cluster analysis for searching cohesive groups in the original
multiplex network. In addition, the analysis of the correlation between the
monoplex and the single layers may be helpful to detect the global community
structure. In such a setting, the target is focused on how
to aggregate the different layers. A straightforward procedure for
implementing an aggregated network may be based on a weighted average of the
adjacency matrices corresponding to the layers (see e.g. \cite{kivela2014multilayer} and references therein). Even if this strategy actually produces a
single weighted adjacency matrix which characterizes the monoplex, it is
prone to some drawbacks, since some layers may be over-represented. Moreover, the
choice of the weights is subjective \citep{kivela2014multilayer}. A more
appropriate approach would be to consider similarity matrices -- with suitable features
-- obtained from the adjacency matrices corresponding to the
single layers, and then carry out an ``average'' of these similarity matrices
according to well-defined mathematical properties. Subsequently, by means of an appropriate similarity matrix average, the aggregated network is obtained.

In the present paper, we consider some recent advances in the theory of
barycenters of objects lying in abstract spaces finalized to the concept of
average for positive definite matrices \citep{alvarez2016fixed, bhatia2009positive}. More precisely, starting from two-mode multilayer
networks, we first show that the corresponding similarity matrices based on the
commonly-adopted similarity measures -- such as the Jaccard, the cosine, and
the SimRank similarity -- belong to the space of completely positive
matrices, which is a subspace of positive definite matrices. Subsequently, in order to
aggregate multiplex networks, we consider the Fr\a'echet mean criterion (see e.g.
\cite{bacak2014computing}) with three different metric choices -- i.e. the
classical Frobenius metric, the Riemannian metric, and the Wasserstein metric
-- to achieve appropriate definitions of average for positive definite
matrices. In addition, we remark the properties of the averages obtained with the three metrics. We
also discuss the choice of the weights to be adopted in the Fr\a'echet
criterion. Finally, we illustrate the theoretical findings with an
application to the aforementioned multiplex networks considered by \cite{baccini2022similarity} in the field of scientometrics.

The paper is organized as follows. Section \ref{sec::notation} contains some preliminaries and notation. In Section \ref{sec::simproperties} we show that similarity matrices based on Jaccard, cosine and
SimRank similarity are completely positive. In Section \ref{sec::barycenters} we present the theoretical framework to compute the averages of the similarity matrices connected to the multiplex networks. In Section \ref{sec::weightchoice} we propose some choices for the weights involved in the averages. Section \ref{sec::implementation} presents the details of the code implementation for the practical computation of the averages. Section \ref{sec::journals} presents the application to the multiplex network involving statistical journals. Finally, Section \ref{sec::conclusion} draws some conclusions.

\section{Some notations and preliminaries} \label{sec::notation}
Let us assume that $\mathcal{P}_n$ is
the space of symmetric positive semidefinite matrices of order $n$,
i.e.
\begin{equation*}\label{def::Pn}
\mathcal{P}_n=\{\boldsymbol{X}\in\mathbb{R}^{n\times
n}:\boldsymbol{X}=\boldsymbol{X}^{\text{T}},\boldsymbol{X}\succeq\boldsymbol{0}\}\text{ ,}
\end{equation*}
where $\boldsymbol{X}\succeq\boldsymbol{0}$ denotes that the eigenvalues of $\boldsymbol{X}$ are
nonnegative. The space $\mathcal{P}_n$ is a differentiable manifold in the
set of Hermitian matrices (for more details on this class of matrices, see
\cite{bhatia2009positive}, Chapter 6). The Frobenius inner product on $\mathcal{P}_n$
is defined by
$\langle\boldsymbol{X}$,$\boldsymbol{Y}\rangle_{\text{F}}=$tr$(\boldsymbol{X}^{\text{T}}\boldsymbol{Y})$ for
$\boldsymbol{X},\boldsymbol{Y}\in\mathcal{P}_n$, and the associated Frobenius norm is given by
$\|\boldsymbol{X}\|_{\text{F}}=$tr$(\boldsymbol{X}^{\text{T}}\boldsymbol{X})^{1/2}$. If $\boldsymbol{X}\in\mathcal{P}_n$,
$\boldsymbol{X}^{1/2}$ denotes the unique positive definite matrix such that
$\boldsymbol{X}=\boldsymbol{X}^{1/2}\boldsymbol{X}^{1/2}$, while the corresponding normalized matrix is given by
$\lambda_1(\boldsymbol{X})^{-1} \boldsymbol{X}$, where $\lambda_1(\boldsymbol{X})>0$ represents the largest
eigenvalue of $\boldsymbol{X}$. Finally, if $\boldsymbol{X}=(x_{ij})$ is a square matrix of order
$n$, we assume that diag$(\boldsymbol{X})=(\delta_{ij}x_{ij})$, while if
$\boldsymbol{x}=(\boldsymbol{x}_1,\ldots,\boldsymbol{x}_n)^{\text{T}}$ is a vector then
diag$(\boldsymbol{x})=(\delta_{ij}x_i)$, where $\delta_{ij}$ represents the Kronecker
delta.

Let us consider the subspace $\mathcal{C}\mathcal{P}_n\subset\mathcal{P}_n$
of completely positive matrices, i.e.
\begin{equation*}
\mathcal{C}\mathcal{P}_n=\{\boldsymbol{X}\in\mathbb{R}^{n\times
n}:\boldsymbol{X}=\boldsymbol{Y}^{\text{T}}\boldsymbol{Y},\exists\boldsymbol{Y}\in\mathbb{R}^{p\times
n},\boldsymbol{Y}\geq\boldsymbol{0}\}\text{ ,}
\end{equation*}
where $\boldsymbol{X}\geq\boldsymbol{0}$ denotes that the entries of $\boldsymbol{X}$ are nonnegative.
For the properties of this class of matrices, see the monographs by \cite{berman2003completely}, \cite{shaked2021copositive} and  \cite{johnson2020matrix}. A necessary condition for a matrix $\boldsymbol{X}$ to be completely positive
is that $\boldsymbol{X}\succeq\boldsymbol{0}$ and $\boldsymbol{X}\geq\boldsymbol{0}$, even if --
contrary to intuition -- this condition is not generally sufficient \citep{berman2003completely}. In addition, \cite{berman2003completely} show that $\mathcal{C}\mathcal{P}_n$ is a closed convex cone in the
class of Hermitian matrices -- and hence in $\mathcal{P}_n$. The following Proposition provides some results
on completely positive matrices which will be helpful in the following
sections.
\begin{proposition}\label{prop:1}
By assuming that $\boldsymbol{X},\boldsymbol{Y} \in\mathcal{C}\mathcal{P}_n$, it holds:

$(i)$ $\boldsymbol{X}+\boldsymbol{Y}$ $\in\mathcal{C}\mathcal{P}_n$; 

$(ii)$ $\boldsymbol{X}\odot\boldsymbol{Y} \in\mathcal{C}\mathcal{P}_n$, where the symbol $\odot$
denotes the Hadamard product;

$(iii)$ $\boldsymbol{Z}^{\text{T}}\boldsymbol{X}\boldsymbol{Z} \in\mathcal{C}\mathcal{P}_n$ if $\boldsymbol{Z}$ is a square matrix
of order $n$ such that $\boldsymbol{Z} \geq \boldsymbol{0}$;
\end{proposition}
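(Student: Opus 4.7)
The plan is to verify each of the three closure properties by producing an explicit nonnegative factorization, using the definition $\mathcal{CP}_n=\{\boldsymbol{X}=\boldsymbol{Y}^{\text{T}}\boldsymbol{Y}:\boldsymbol{Y}\geq\boldsymbol{0}\}$ directly. The overall strategy is concrete and constructive: in each case I start from factorizations $\boldsymbol{X}=\boldsymbol{A}^{\text{T}}\boldsymbol{A}$ and $\boldsymbol{Y}=\boldsymbol{B}^{\text{T}}\boldsymbol{B}$ with $\boldsymbol{A},\boldsymbol{B}\geq\boldsymbol{0}$ and I build a nonnegative matrix whose Gram matrix is the desired object.

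For $(i)$, I would stack $\boldsymbol{A}$ and $\boldsymbol{B}$ vertically, setting $\boldsymbol{C}=\bigl[\boldsymbol{A}^{\text{T}}\ \boldsymbol{B}^{\text{T}}\bigr]^{\text{T}}$. Since $\boldsymbol{A},\boldsymbol{B}\geq\boldsymbol{0}$ we have $\boldsymbol{C}\geq\boldsymbol{0}$, and by block multiplication $\boldsymbol{C}^{\text{T}}\boldsymbol{C}=\boldsymbol{A}^{\text{T}}\boldsymbol{A}+\boldsymbol{B}^{\text{T}}\boldsymbol{B}=\boldsymbol{X}+\boldsymbol{Y}$, which exhibits $\boldsymbol{X}+\boldsymbol{Y}$ as an element of $\mathcal{CP}_n$. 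For $(iii)$, since the product of two entrywise nonnegative matrices is entrywise nonnegative, the matrix $\boldsymbol{AZ}$ is nonnegative, and $\boldsymbol{Z}^{\text{T}}\boldsymbol{XZ}=\boldsymbol{Z}^{\text{T}}\boldsymbol{A}^{\text{T}}\boldsymbol{AZ}=(\boldsymbol{AZ})^{\text{T}}(\boldsymbol{AZ})$ gives the required factorization.

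For $(ii)$, the natural route is to pass to the rank-one decomposition induced by the factorization. Writing $\boldsymbol{a}_1,\ldots,\boldsymbol{a}_p$ for the rows of $\boldsymbol{A}$ and $\boldsymbol{b}_1,\ldots,\boldsymbol{b}_q$ for the rows of $\boldsymbol{B}$, one has $\boldsymbol{X}=\sum_i\boldsymbol{a}_i\boldsymbol{a}_i^{\text{T}}$ and $\boldsymbol{Y}=\sum_j\boldsymbol{b}_j\boldsymbol{b}_j^{\text{T}}$ with all $\boldsymbol{a}_i,\boldsymbol{b}_j\geq\boldsymbol{0}$. The key identity is the Schur-product formula $(\boldsymbol{a}\boldsymbol{a}^{\text{T}})\odot(\boldsymbol{b}\boldsymbol{b}^{\text{T}})=(\boldsymbol{a}\odot\boldsymbol{b})(\boldsymbol{a}\odot\boldsymbol{b})^{\text{T}}$, from which bilinearity of $\odot$ yields
\begin{equation*}
\boldsymbol{X}\odot\boldsymbol{Y}=\sum_{i,j}(\boldsymbol{a}_i\odot\boldsymbol{b}_j)(\boldsymbol{a}_i\odot\boldsymbol{b}_j)^{\text{T}}.
\end{equation*}
Each vector $\boldsymbol{a}_i\odot\boldsymbol{b}_j$ is nonnegative, so collecting these vectors as rows of a single matrix $\boldsymbol{D}\geq\boldsymbol{0}$ gives $\boldsymbol{X}\odot\boldsymbol{Y}=\boldsymbol{D}^{\text{T}}\boldsymbol{D}$, which is the desired completely positive factorization.

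None of the three steps presents a genuine obstacle; the content is algebraic bookkeeping of nonnegative factorizations. The only place where some care is warranted is part $(ii)$, where one must recognize that the right perspective is the outer-product decomposition of a completely positive matrix (equivalent to the $\boldsymbol{Y}^{\text{T}}\boldsymbol{Y}$ definition by reading off the rows of $\boldsymbol{Y}$), and then invoke the pointwise identity for the Hadamard product of two rank-one matrices. The remainder is a straightforward application of Proposition \ref{prop:1}$(i)$ iterated over the indices $i,j$.
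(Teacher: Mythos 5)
Your proof is correct. The paper itself does not argue these facts at all: it simply cites Corollary 2.1, Corollary 2.2 and Proposition 2.2 of Berman and Shaked-Monderer's monograph on completely positive matrices. Your route is therefore genuinely different in that it is self-contained: for $(i)$ the vertical stacking $\boldsymbol{C}=\bigl[\boldsymbol{A}^{\text{T}}\ \boldsymbol{B}^{\text{T}}\bigr]^{\text{T}}$ with $\boldsymbol{C}^{\text{T}}\boldsymbol{C}=\boldsymbol{X}+\boldsymbol{Y}$, for $(iii)$ the factorization $(\boldsymbol{A}\boldsymbol{Z})^{\text{T}}(\boldsymbol{A}\boldsymbol{Z})$ with $\boldsymbol{A}\boldsymbol{Z}\geq\boldsymbol{0}$, and for $(ii)$ the rank-one (outer-product) reading of the factorization together with the identity $(\boldsymbol{a}\boldsymbol{a}^{\text{T}})\odot(\boldsymbol{b}\boldsymbol{b}^{\text{T}})=(\boldsymbol{a}\odot\boldsymbol{b})(\boldsymbol{a}\odot\boldsymbol{b})^{\text{T}}$ are exactly the standard arguments behind the cited results, and each is verified correctly. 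What your approach buys is an elementary, fully explicit proof that makes the closure properties transparent from the definition of $\mathcal{C}\mathcal{P}_n$; what the paper's citation buys is brevity and a pointer to the broader theory. One small remark: your closing sentence that part $(ii)$ finishes ``by iterating $(i)$ over the indices $i,j$'' is harmless but redundant, since you already gave the direct factorization $\boldsymbol{X}\odot\boldsymbol{Y}=\boldsymbol{D}^{\text{T}}\boldsymbol{D}$ by collecting the vectors $\boldsymbol{a}_i\odot\boldsymbol{b}_j$ as the rows of $\boldsymbol{D}$; just make sure $(i)$ is stated as established before it is invoked, which your ordering already ensures.
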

\begin{proof}See Corollary 2.1., 2.2 and Proposition 2.2 by \cite{berman2003completely}.
\end{proof}

\section{Completely positive similarity matrices from two-mode networks}\label{sec::simproperties}

We consider two-mode (bipartite) networks, i.e. networks displaying two types of
nodes where edges solely tie nodes of different type (for
more details on bipartite networks, see e.g. \cite{newman2018networks}, Chapter 6). This
kind of networks is commonly adopted to describe the membership of a set of
$p$ items to $n$ groups. The items are represented by the first set of nodes,
while the groups are represented by the second set of nodes -- and the edges
connect the items to the groups they belong to. 
A bipartite network
captures exactly the same information as a hypergraph, even if for most
purposes bipartite networks are more convenient. Indeed, most of the social and
biological networks belong to this family.

A two-mode network can be characterized by an incidence matrix
$\boldsymbol{B}=(b_{ij})$ of order $(p\times n)$, where
\begin{equation*}\label{def::Bbipartite}
b_{ij}=
\begin{cases}
1&\text{if item }i\text{ belongs to group }j\\
0&\text{otherwise .}
\end{cases}
\end{equation*}
To avoid triviality, we assume that there exist $i\in\{1,\ldots,p\}$ such
that $b_{ij}=1$ for each $j=1,\ldots,n$, i.e. there is at least one item for
each group. Obviously, it holds that $\boldsymbol{B}\geq\boldsymbol{0}$. In order to
provide a practical illustration in the scientometric
setting, the network of journals and editors in a given discipline may be
considered as an example of a two-mode bipartite network with $p$ editors as items and $n$
journals as groups. In such a case, $b_{ij}$ constitutes the indicator
variable for the membership of the $i$-th editor to the $j$-th journal. 
On the basis of the incidence matrix $\boldsymbol{B}$, the one-mode projection with respect to
groups of the two-mode bipartite network can be considered. Hence, if
$\boldsymbol{G} =(g_{ij})$ is the matrix of order $(n\times n)$ such that
$\boldsymbol{G} =\boldsymbol{B}^{\text{T}}\boldsymbol{B}$, it holds that $g_{ij}=\sum_{k=1}^p b_{ki} b_{kj}$, i.e.
the entry $g_{ij}$ gives the total number of items which are in both the
$i$-th group and the $j$-th group for $i,j=1,\ldots,n$. Consequently, $g_{ii}$
gives the number of items in the $i$-th group. As an example, in the network
of journals and editors, $g_{ij}$ provides the number of editors belonging to
the board of both the $i$-th journal and the $j$-th journal, while $g_{ii}$
is the number of editors in the board of the $i$-th journal. This definition of $\boldsymbol{G}$ implies that $\boldsymbol{G}\in\mathcal{C}\mathcal{P}_n$. 

The weighted adjacency matrix $\boldsymbol{Z}=(z_{ij})$ of order $(n\times n)$
corresponding to the weighted one-mode projection is defined as
$\boldsymbol{Z}=\boldsymbol{G}-$diag$(\boldsymbol{G})$. Thus, $z_{ij}$ denotes the weight of the edge between
the $i$-th group node and the $j$-th group node. The corresponding unweighted
adjacency matrix $\boldsymbol{A}=(a_{ij})$ of order $(n\times n)$ is such that
$a_{ij}=\mathds{1}_{\{z_{ij}>0\}}$ for $i,j=1,\ldots,n$, where
$\mathds{1}_E$ is the indicator function of the set $E$. Hence,
$a_{ij}$ is binary-valued, as it assumes value $1$ if there is an edge between the
$i$-th group node and the $j$-th group node, $0$ otherwise. As an example, in the network of
journals and editors, $a_{ij}=1$ if the $i$-th journal and the $j$-th journal
share at least one editor; $a_{ij}=0$ otherwise. It is clear
that $\boldsymbol{Z}\geq\boldsymbol{0}$ and $\boldsymbol{A}\geq\boldsymbol{0}$. Finally, we consider
the matrix $\boldsymbol{F} = (f_{ij})$ given by $\boldsymbol{F} = \boldsymbol{A}^{\text{T}}\boldsymbol{A} = \boldsymbol{A}^2$. Thus, since
$f_{ij}=\sum_{k=1}^na_{ki}a_{kj}$, the entry $f_{ij}$ gives the total number
of common neighbours between the $i$-th group node and the $j$-th group node
for $i,j=1,\ldots,n$. Moreover, since $a_{ij}^2=a_{ij}$ for $i,j=1,\ldots,n$,
the entry $f_{ii}$ gives the degree corresponding to the $i$-th group node.
On the basis of its definition, it is also follows that
$\boldsymbol{F}\in\mathcal{C}\mathcal{P}_n$. 

The introduction of the matrices $\boldsymbol{G}\text{, } \boldsymbol{Z} \text{ and } \boldsymbol{F}$ permits to analyse some common choices of similarity measures that allow to build one-mode networks.
Our aim is to show
that the commonly adopted similarity matrices are completely positive. In
\cite{newman2018networks}, Section 7.6, several measures of structural and regular
equivalence -- i.e. similarity -- for a network are indicated. Concerning the normalized
similarity measures of structural equivalence, $g_{ij}$ belongs itself to this class for similarity measures. Therefore, $\boldsymbol{G}$ may be
considered as a similarity matrix, even if it may be not suitable to adopt
this option in practice. Some authors implement the similarity matrices of
structural equivalence on the basis of the matrix $\boldsymbol{G}$ (see e.g. \cite{leydesdorff2008normalization}), while others suggest the matrix $\boldsymbol{F}$ to the same aim (see e.g. \cite{newman2018networks}, Section 7.6). Since both these matrices are completely positive, we
provide the following results by adopting $\boldsymbol{G}$ -- even if they can be also
obtained by using $\boldsymbol{F}$ in place of $\boldsymbol{G}$.

In the context of similarity matrices of structural equivalence, \cite{newman2018networks} emphasizes that the well-known Jaccard coefficient and the cosine
similarity coefficient are the most widely used for the practical analysis of
networks (see also \cite{eck2009normalize}). We first consider the
similarity matrix $\boldsymbol{J} =(J_{ij})$ of order $(n\times n)$ which is based on the
Jaccard coefficient. More precisely, $J_{ij}$ constitutes the Jaccard
coefficient between the $i$-th group and the $j$-th group, i.e.
\begin{equation*}\label{eq:jacc}
J_{ij}=\frac{g_{ij}}{g_{ii}+g_{jj}-g_{ij}}\text{ .}
\end{equation*}
It should be remarked that the quantity $(g_{ii}+g_{jj}-g_{ij})$ in the
denominator of $J_{ij}$ actually represents the total number of items belonging to the $i$-th group or to the $j$-th group. Hence, $J_{ij}$ is a
normalized similarity measure in $[0,1]$ obtained by considering the common
items to the two groups divided by the total number of distinct items in the
two groups. Since $\boldsymbol{G}$ is symmetric, it can be easily proved that $\boldsymbol{J}$ is in turn
symmetric. The following Proposition provides the target result on $\boldsymbol{J}$.

\begin{proposition}\label{prop:2}
The matrix $\boldsymbol{J}$ is completely positive.
\end{proposition}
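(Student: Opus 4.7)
The plan is to exhibit the Jaccard matrix as a Hadamard product of two completely positive matrices and then invoke part $(ii)$ of Proposition \ref{prop:1}. Specifically, let $\boldsymbol{D}=(D_{ij})$ be the matrix with entries $D_{ij}=1/(g_{ii}+g_{jj}-g_{ij})$; these denominators are strictly positive because each group contains at least one item, so that $g_{ii}+g_{jj}-g_{ij}$ is the cardinality of the union of the $i$-th and $j$-th groups, hence at least one. Since $J_{ij}=g_{ij}\, D_{ij}$, one has $\boldsymbol{J}=\boldsymbol{G}\odot\boldsymbol{D}$, and because $\boldsymbol{G}\in\mathcal{C}\mathcal{P}_n$ it suffices to prove $\boldsymbol{D}\in\mathcal{C}\mathcal{P}_n$; the conclusion then follows from Proposition \ref{prop:1}$(ii)$.

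To handle the entrywise reciprocal I would use the integral representation
\[
\frac{1}{g_{ii}+g_{jj}-g_{ij}} = \int_0^{\infty} e^{-t(g_{ii}+g_{jj}-g_{ij})}\, dt,
\]
which realises $\boldsymbol{D}$ as the integral, over $t>0$, of the matrices $\boldsymbol{M}_t$ having entries $e^{-t(g_{ii}+g_{jj}-g_{ij})}$. The core of the argument is to show that each $\boldsymbol{M}_t$ is completely positive. Factoring
\[
e^{-t(g_{ii}+g_{jj}-g_{ij})} = e^{-tg_{ii}}\, e^{-tg_{jj}}\cdot e^{tg_{ij}},
\]
one recognises the first two factors as the $(i,j)$-entry of the rank-one nonnegative matrix $\boldsymbol{v}_t\boldsymbol{v}_t^{\text{T}}$, where $\boldsymbol{v}_t=(e^{-tg_{11}},\ldots,e^{-tg_{nn}})^{\text{T}}$, and such a matrix is trivially in $\mathcal{C}\mathcal{P}_n$. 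The third factor is the $(i,j)$-entry of the Taylor expansion $\sum_{k\geq 0}(t^{k}/k!)\,\boldsymbol{G}^{\odot k}$ in the Hadamard powers of $\boldsymbol{G}$; iterated use of Proposition \ref{prop:1}$(ii)$, with the convention $\boldsymbol{G}^{\odot 0}=\boldsymbol{1}\boldsymbol{1}^{\text{T}}$, makes each $\boldsymbol{G}^{\odot k}$ completely positive, so the partial sums of the series remain in $\mathcal{C}\mathcal{P}_n$ by part $(i)$. Closedness of the cone yields the full exponential matrix in $\mathcal{C}\mathcal{P}_n$, and a further application of part $(ii)$ gives $\boldsymbol{M}_t\in\mathcal{C}\mathcal{P}_n$.

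Finally, the integral $\boldsymbol{D}=\int_{0}^{\infty} \boldsymbol{M}_t\, dt$ is a limit of Riemann sums, each of which is a nonnegative combination of matrices in $\mathcal{C}\mathcal{P}_n$ and therefore itself in $\mathcal{C}\mathcal{P}_n$ by part $(i)$; by closedness of the cone, the limit $\boldsymbol{D}$ also belongs to $\mathcal{C}\mathcal{P}_n$, and the Hadamard factorisation $\boldsymbol{J}=\boldsymbol{G}\odot\boldsymbol{D}$ completes the proof. The main obstacle is precisely this middle step: complete positivity of the \emph{union reciprocal} matrix $\boldsymbol{D}$ is not a consequence of the purely algebraic closure properties listed in Proposition \ref{prop:1}, and one must leverage the integral representation of $1/x$ together with the topological closedness of $\mathcal{C}\mathcal{P}_n$, taking some care to justify the passages to the limit in both the Taylor series defining the exponential and the Riemann integral defining $\boldsymbol{D}$.
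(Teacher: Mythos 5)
Your proof is correct, and it reaches the paper's conclusion by a genuinely different route for the key step. Both arguments start from the same Hadamard factorisation of $\boldsymbol{J}$ into $\boldsymbol{G}$ times an entrywise reciprocal of the union sizes, but they diverge in how complete positivity of that reciprocal matrix is established. The paper introduces the complement-count matrix $\boldsymbol{U}$ with $u_{ij}=p-g_{ii}-g_{jj}+g_{ij}$, shows it is a Gram matrix of the nonnegative matrix $\boldsymbol{1}_p\boldsymbol{1}_n^{\text{T}}-\boldsymbol{B}$, and then expands $1/(1-u_{ij}/p)$ as a geometric series in the Hadamard powers of $\boldsymbol{U}$, exploiting the bound $u_{ij}<p$ that comes from the binary incidence structure; a single series plus closedness of the cone finishes the argument. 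You instead avoid the complement trick entirely, writing $1/(g_{ii}+g_{jj}-g_{ij})=\int_0^\infty e^{-t(g_{ii}+g_{jj}-g_{ij})}\,dt$ and factoring the integrand into a nonnegative rank-one matrix $\boldsymbol{v}_t\boldsymbol{v}_t^{\text{T}}$ times the Hadamard exponential of $t\boldsymbol{G}$, each completely positive by Proposition \ref{prop:1} and closedness; a Riemann-sum limit then handles the integral. The trade-off is clear: the paper's route is more elementary, with one limiting process and convergence controlled by the explicit ratio $u_{ij}/p<1$, but it leans on the $0$--$1$ incidence matrix $\boldsymbol{B}$ to make $\boldsymbol{U}$ completely positive; your route needs two limit passages (the Hadamard-exponential series and the improper integral), but it only uses that $\boldsymbol{G}$ is completely positive and that the union counts $g_{ii}+g_{jj}-g_{ij}$ are strictly positive, so it applies verbatim to any completely positive matrix with that property (for instance with $\boldsymbol{F}$ in place of $\boldsymbol{G}$) without invoking the bipartite construction. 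The limit justifications you flag are routine in the finite-dimensional closed cone $\mathcal{C}\mathcal{P}_n$: entrywise convergence suffices, and the integrand decays like $e^{-t}$ since each denominator is at least one, so both passages are sound.
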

\begin{proof}
Let us consider the matrix $\boldsymbol{U}=(u_{ij})$ of order $(n\times n)$ where
\begin{equation*}
u_{ij}=p-g_{ii}-g_{jj}+g_{ij}\text{ ,}
\end{equation*}
i.e. $u_{ij}$ gives the total number of items which do not belong to the
$i$-th group and/or to the $j$-th group. Since $\boldsymbol{G}=\boldsymbol{B}^{\text{T}}\boldsymbol{B}$ and since
$b_{ij}^2=b_{ij}$ for $i,j=1,\ldots,n$, it reads
\begin{equation*}
u_{ij}=\sum_{k=1}^p(1-b_{ki}-b_{kj}+b_{ki}b_{kj})=%
\sum_{k=1}^p(1-b_{ki})(1-b_{kj})\text{ .}
\end{equation*}
Hence, it holds
$\boldsymbol{U}=(\boldsymbol{1}_p\boldsymbol{1}_n^{\text{T}}-\boldsymbol{B})^{\text{T}}(%
\boldsymbol{1}_p\boldsymbol{1}_n^{\text{T}}-\boldsymbol{B})$, where $\boldsymbol{1}_p$ denotes the column vector of length $p$ with all entries equal to $1$, and $\boldsymbol{1}_n^{\text{T}}$ is the row vector of length $n$ with elements all equal to $1$. since
$\boldsymbol{1}_p\boldsymbol{1}_n^{\text{T}}-\boldsymbol{B}\geq\boldsymbol{0}$, we have that $\boldsymbol{U}\in\mathcal{C}\mathcal{P}_n$.
Moreover, let us consider the matrix $\boldsymbol{V}=(v_{ij})$ of order $(n\times n)$ such
that
\begin{equation*}
v_{ij}=\frac{1}{1-u_{ij}/p}=\sum_{k=0}^\infty\frac{1}{p^k}\,u_{ij}^k\text{ ,}
\end{equation*}
where the last equality is obtained from the Geometric series, since it holds
$0<u_{ij}<p$ on the basis of the definition of $u_{ij}$. The matrix $\boldsymbol{V}$ may be
expressed as
\begin{equation*}
\boldsymbol{V}=\sum_{k=0}^\infty\frac{1}{p^k}\,\boldsymbol{U}_k\text{ ,}
\end{equation*}
where the matrix $\boldsymbol{U}_k$ is recursively defined as $\boldsymbol{U}_k = \boldsymbol{U}_{k-1}\odot\boldsymbol{U}$ with
the initial condition $\boldsymbol{U}_0 = \boldsymbol{1}_n\boldsymbol{1}_n^{\text{T}}$.
Since the Hadamard product of completely positive matrices is in turn a
completely positive matrix from $(ii)$ of Proposition 1 and
$\boldsymbol{U}_0\in\mathcal{C}\mathcal{P}_n$, it also holds that
$\boldsymbol{U}_k\in\mathcal{C}\mathcal{P}_n$. In addition, since also the sum of
completely positive matrices is a completely positive matrix from $(i)$ of
Proposition 1, and $\mathcal{C}\mathcal{P}_n$ is a closed cone, it follows
that $\boldsymbol{V}\in\mathcal{C}\mathcal{P}_n$. Thus, since
\begin{equation*}
\boldsymbol{J}=\frac{1}{p}\,\boldsymbol{G}\odot\boldsymbol{V}
\end{equation*}
and $\boldsymbol{G}\in\mathcal{C}\mathcal{P}_n$, we conclude that
$\boldsymbol{J}\in\mathcal{C}\mathcal{P}_n$.
\end{proof}

In the framework of structural equivalence, a further suitable similarity matrix is obtained from the cosine similarity coefficient. This choice yields a similarity matrix $\boldsymbol{C} = (c_{ij})$ of order $(n\times n)$, where the entry
$c_{ij}$ represents the cosine similarity coefficient
\begin{equation}\label{eq::cossim}
c_{ij}=\frac{g_{ij}}{(g_{ii}g_{jj})^{1/2}}\text{ .}
\end{equation}
The appellative of cosine coefficient derives from its morphologhy
as a ratio of an inner product of two vectors to the product of the
corresponding norms. Hence, on the basis of the Cauchy-Schwarz inequality,
$c_{ij}$ is a normalized similarity measure in $[0,1]$. Moreover, since $\boldsymbol{G}$ is
symmetric, it can be easily shown that $\boldsymbol{C}$ is in turn symmetric. The following
Proposition gives a result analogous to Proposition 3 for the matrix $\boldsymbol{C}$.

\begin{proposition}\label{prop:3}
The matrix $\boldsymbol{C}$ is completely positive.
\end{proposition}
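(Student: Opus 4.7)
The plan is to express $\boldsymbol{C}$ as a two-sided congruence of $\boldsymbol{G}$ by a nonnegative diagonal matrix, so that complete positivity is immediate from part $(iii)$ of Proposition \ref{prop:1}. First I would check that the scaling factors $g_{ii}^{-1/2}$ are well defined: the nontriviality assumption on $\boldsymbol{B}$ ensures that every column of $\boldsymbol{B}$ has at least one nonzero entry, hence $g_{ii} = \sum_{k=1}^p b_{ki}^2 = \sum_{k=1}^p b_{ki} > 0$ for every $i=1,\ldots,n$.

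Next I would introduce the diagonal matrix $\boldsymbol{D}=\text{diag}(g_{11}^{-1/2},\ldots,g_{nn}^{-1/2})$ and verify that $\boldsymbol{C}=\boldsymbol{D}\boldsymbol{G}\boldsymbol{D}$. This is a one-line computation: the $(i,j)$-entry of $\boldsymbol{D}\boldsymbol{G}\boldsymbol{D}$ equals $g_{ii}^{-1/2}\,g_{ij}\,g_{jj}^{-1/2}$, which is exactly $c_{ij}$ by the definition in \eqref{eq::cossim}.

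Finally, since $\boldsymbol{D}$ is square of order $n$, symmetric, and has nonnegative entries (so $\boldsymbol{D}\geq\boldsymbol{0}$), and since $\boldsymbol{G}\in\mathcal{C}\mathcal{P}_n$ has already been established in Section \ref{sec::simproperties}, part $(iii)$ of Proposition \ref{prop:1} applied with $\boldsymbol{Z}=\boldsymbol{D}$ yields $\boldsymbol{C}=\boldsymbol{D}^{\text{T}}\boldsymbol{G}\boldsymbol{D}\in\mathcal{C}\mathcal{P}_n$. There is essentially no serious obstacle: the content of Proposition \ref{prop:3} is that the cosine similarity is merely a diagonal rescaling of $\boldsymbol{G}$, after which the closure properties of $\mathcal{C}\mathcal{P}_n$ collected in Proposition \ref{prop:1} do all the work. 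In contrast with the Jaccard case, no infinite series expansion is needed, because the normalizing factor in \eqref{eq::cossim} already has a separable (rank-one in each factor) structure.
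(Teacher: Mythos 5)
Your argument is correct and is essentially the paper's own proof: both hinge on writing $\boldsymbol{C}$ as a congruence of $\boldsymbol{G}$ by the nonnegative diagonal matrix $\text{diag}(g_{11}^{-1/2},\ldots,g_{nn}^{-1/2})$. The only cosmetic difference is that the paper concludes directly from the definition, via the factorization $\boldsymbol{C}=(\boldsymbol{B}\boldsymbol{U})^{\text{T}}\boldsymbol{B}\boldsymbol{U}$ with $\boldsymbol{B}\boldsymbol{U}\geq\boldsymbol{0}$, whereas you invoke part $(iii)$ of Proposition \ref{prop:1}; your added check that $g_{ii}>0$ is a sensible, if implicit in the paper, detail.
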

\begin{proof}Let us consider the vector
$\boldsymbol{v} = (g_{11}^{-1/2},\ldots,g_{nn}^{-1/2})^{\text{T}}$ of order $n$ and the
diagonal matrix $\boldsymbol{U}=$diag$(\boldsymbol{v})$. The matrix $\boldsymbol{C}$ can be expressed as
\begin{equation*}
\boldsymbol{C}=\boldsymbol{U}^{\text{T}}\boldsymbol{GU}=\boldsymbol{U}^{\text{T}}\boldsymbol{B}^{\text{T}}%
\boldsymbol{BU}=(\boldsymbol{BU})^{\text{T}}\boldsymbol{BU},
\end{equation*}
and hence $\boldsymbol{C}\in\mathcal{C}\mathcal{P}_n$ since
$\boldsymbol{BU}\geq\boldsymbol{0}$.
\end{proof}

In the framework of normalized similarity measures of regular equivalence, the SimRank similarity measure,
originally introduced by \cite{jeh2002simrank}, is often considered (see
\cite{newman2018networks}, Section 7.6). The SimRank similarity matrix
$\boldsymbol{\Sigma}=(\sigma_{ij})$ of order $(n\times n)$ satisfies the matrix equation
$\boldsymbol{\Sigma}=c\boldsymbol{P}^{\text{T}}\boldsymbol{\Sigma}\boldsymbol{P}+\boldsymbol{D}$, where
$c\in(0,1)$ is the so-called delay factor, while $\boldsymbol{D}\boldsymbol{=}$diag$(\boldsymbol{d})$
with $\boldsymbol{d}=(d_1,\ldots,d_n)^{\text{T}}$ is such that $d_i\in[1-c,1]$ for
$i=1,\ldots,n$ (see e.g. \cite{liao2019second}). In addition, $\boldsymbol{P}=(p_{ij})$ is the
column-normalized adjacency matrix $\boldsymbol{A}$, i.e.
$p_{ij}=a_{ij}/\sum_{l=1}^na_{il}$, and hence $\boldsymbol{P}\geq\boldsymbol{0}$. In turn,
the following proposition provides the target result for
$\boldsymbol{\Sigma}$.

\begin{proposition}\label{prop:4}The matrix $\boldsymbol{\Sigma}$ is completely positive.
\end{proposition}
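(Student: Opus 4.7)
The plan is to unroll the SimRank fixed-point equation $\boldsymbol{\Sigma}=c\boldsymbol{P}^{\text{T}}\boldsymbol{\Sigma}\boldsymbol{P}+\boldsymbol{D}$ into a Neumann-type series and then verify that every term already lives in the cone $\mathcal{C}\mathcal{P}_n$. Iterating the recursion once yields $\boldsymbol{\Sigma}=c^2(\boldsymbol{P}^{\text{T}})^2\boldsymbol{\Sigma}\boldsymbol{P}^2+c\boldsymbol{P}^{\text{T}}\boldsymbol{D}\boldsymbol{P}+\boldsymbol{D}$, and an obvious induction leads to the candidate expansion
\begin{equation*}
\boldsymbol{\Sigma}=\sum_{k=0}^{\infty}c^{k}(\boldsymbol{P}^{k})^{\text{T}}\boldsymbol{D}\,\boldsymbol{P}^{k}\,.
\end{equation*}
Justifying this identity is the first technical step: since $\boldsymbol{P}$ is row-stochastic (its rows sum to $1$ by construction), its spectral radius equals $1$, so the linear operator $\mathcal{L}(\boldsymbol{X})=c\,\boldsymbol{P}^{\text{T}}\boldsymbol{X}\boldsymbol{P}$ has spectral radius at most $c<1$; the Neumann series $\sum_k\mathcal{L}^k(\boldsymbol{D})$ therefore converges in, say, Frobenius norm and provides the unique solution of the fixed-point equation.

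Once the series representation is in hand, each summand is handled directly by Proposition~\ref{prop:1}. The diagonal matrix $\boldsymbol{D}=\mathrm{diag}(\boldsymbol{d})$ has nonnegative entries, so $\boldsymbol{D}=\boldsymbol{D}^{1/2}\boldsymbol{D}^{1/2}$ with $\boldsymbol{D}^{1/2}\geq\boldsymbol{0}$, whence $\boldsymbol{D}\in\mathcal{C}\mathcal{P}_n$. Because $\boldsymbol{P}\geq\boldsymbol{0}$ implies $\boldsymbol{P}^{k}\geq\boldsymbol{0}$ for every $k\geq 0$, part $(iii)$ of Proposition~\ref{prop:1} (applied with $\boldsymbol{Z}=\boldsymbol{P}^k$) gives $(\boldsymbol{P}^{k})^{\text{T}}\boldsymbol{D}\,\boldsymbol{P}^{k}\in\mathcal{C}\mathcal{P}_n$. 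Since $c^{k}>0$ and $\mathcal{C}\mathcal{P}_n$ is a cone, the scaled term remains completely positive, and the partial sums are completely positive by part $(i)$ of Proposition~\ref{prop:1}.

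The conclusion then follows by closedness: because $\mathcal{C}\mathcal{P}_n$ is a closed convex cone in the space of Hermitian matrices, the norm-convergent limit of a sequence of completely positive matrices is itself completely positive, so $\boldsymbol{\Sigma}\in\mathcal{C}\mathcal{P}_n$. The main obstacle that actually requires care is not the algebraic identity but the convergence claim; everything else reduces to the three stability properties already collected in Proposition~\ref{prop:1}. An alternative route, avoiding the series entirely, would be to define $\boldsymbol{\Sigma}_0=\boldsymbol{D}$ and $\boldsymbol{\Sigma}_{k+1}=c\boldsymbol{P}^{\text{T}}\boldsymbol{\Sigma}_k\boldsymbol{P}+\boldsymbol{D}$, show inductively via parts $(i)$ and $(iii)$ of Proposition~\ref{prop:1} that each $\boldsymbol{\Sigma}_k\in\mathcal{C}\mathcal{P}_n$, verify $\boldsymbol{\Sigma}_k\to\boldsymbol{\Sigma}$ by the contractivity of $\mathcal{L}$, and conclude by the closedness of the cone.
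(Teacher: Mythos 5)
Your proposal is correct and follows essentially the same route as the paper: unrolling the SimRank equation into the series $\sum_{k\geq 0} c^k(\boldsymbol{P}^k)^{\text{T}}\boldsymbol{D}\boldsymbol{P}^k$, applying parts $(i)$ and $(iii)$ of Proposition~\ref{prop:1} to each term, and invoking the closedness of the cone $\mathcal{C}\mathcal{P}_n$. The only difference is that you spell out the convergence of the Neumann series and the complete positivity of $\boldsymbol{D}$, details the paper leaves implicit.
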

\begin{proof}The matrix equation which defines $\boldsymbol{\Sigma}$ may be
recursively solved in order to obtain the series representation
\begin{equation*}
\boldsymbol{\Sigma}=\sum_{k=0}^\infty
c^k(\boldsymbol{P}^{\text{T}})^k\boldsymbol{DP}^k\text{ .}
\end{equation*}
Since $\boldsymbol{P}\geq\boldsymbol{0}$ by definition, it holds that
$\boldsymbol{P}^k\geq\boldsymbol{0}$ on the basis of $(ii)$ of Proposition 2 and hence
$(\boldsymbol{P}^{\text{T}})^k\boldsymbol{DP}^k=(\boldsymbol{P}^k)^{\text{T}}\boldsymbol{DP}^k\in\mathcal{C}%
\mathcal{P}_n$ on the basis of $(iii)$ of Proposition 1. Since the sum of
completely positive matrices is a completely positive matrix from $(i)$ of
Proposition 1 and $\mathcal{C}\mathcal{P}_n$ is a closed cone, it follows
that $\boldsymbol{\Sigma}\in\mathcal{C}\mathcal{P}_n$.
\end{proof}

\section{Aggregation of multiplex networks} \label{sec::barycenters}
Let us consider a multilayer network where each layer is a two-mode bipartite network. We assume the additional property that the bipartite graphs on each layer have in common one of the two sets of the node bipartition. The distinct subsets of nodes can be interpreted as distinct set of items, while the vertex subset can be seen as a set of groups the different items belong (or do not belong) to. A practical example of this structure is constituted by a network of journals in a given discipline with respect to the
editors in the journal boards and to the authors contributing to the journals. These relations can be modelled as a two-layer network of the type described above. Indeed, the first layer is a bipartite graph connecting editors (items) to journals (groups); the second layer is a bipartite graph connecting authors (items) to journals (groups). Therefore, the set of journals (groups) is common to both layers. 

This particular multilayer structure can be turned into a multiplex network by considering the one-mode projection of each layer on the common subset of nodes. Figure \ref{fig::schema_multilayeronemode} displays the example of the simple two-layer journal network discussed above (Fig. \ref{fig::schema_multilayeronemode} (a)), which can be turned into a multiplex network of journals (Fig. \ref{fig::schema_multilayeronemode} (b)) through one-mode projection. 

\begin{figure}[t]
    \centering
    \includegraphics[width=\textwidth]{./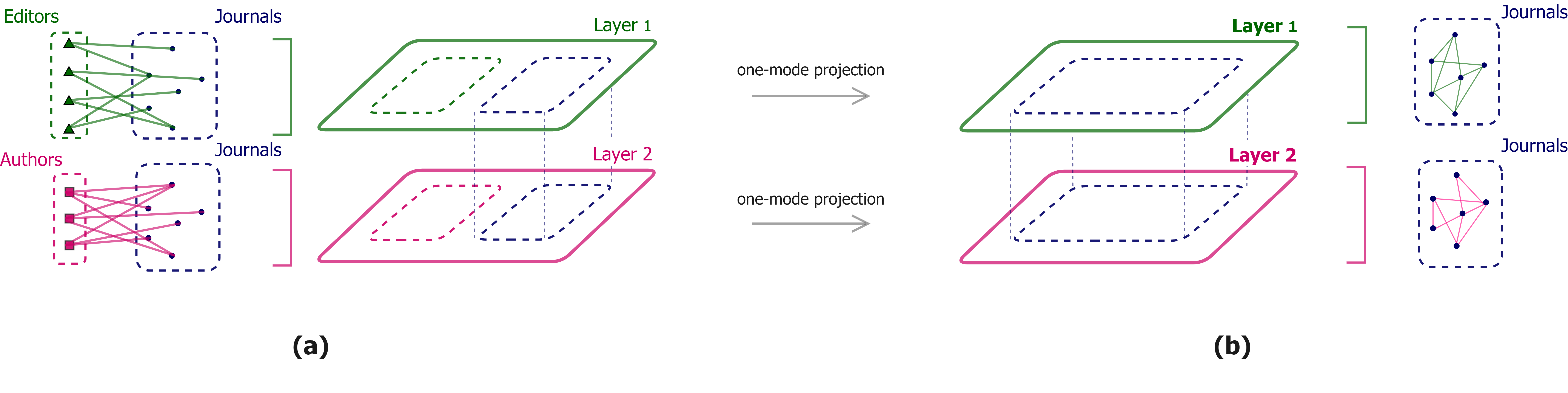}
    \caption{Scheme depicting an example of multilayer network of journals, editors and authors with two layers, each containing a bipartite graph. We start by considering a multilayer network with a bipartite network on each layer (a). In (a), the set of journals (blue nodes) is common to all the layers, while the green and pink sets represent editors and authors, respectively. The one mode projection on the set of journals allows to obtain the multiplex network (b), where nodes on each layer are journals.}
    \label{fig::schema_multilayeronemode}
\end{figure}

Let us deal with a multilayer network with $m$ layers having the structure described above. For the seek of simplicity, we will refer to the common set of nodes in all the layers as the set of groups, and to the distinct set of nodes as the set of items. 
Let $n$ denote the number of groups. Formally, $m$ two-mode bipartite networks are considered, where for the $k$-th layer
there exists an incidence matrix $\boldsymbol{B}_k=(b_{k,ij})$ of order $(p_k\times n)$, with $p_k$ denoting the set of items in the $k$-th layer,
and
\begin{equation*}
b_{k,ij}=
\begin{cases}
1&\text{if item }i\text{ of layer }k\text{ belongs to group }j\\
0&\text{otherwise ,}
\end{cases}
\end{equation*}
with $k=1,\ldots,m$. 
Thus, there are $m$ matrices $\boldsymbol{G}_k=(g_{k,ij})$ of order $(n\times n)$ given by
$\boldsymbol{G}_k=\boldsymbol{B}_k^{\text{T}}\boldsymbol{B}_k$ for $k=1,\ldots,m$. Correspondingly, there are
$k$ weighted and unweighted adjacency matrices $\boldsymbol{Z}_k=(z_{k,ij})$,
$\boldsymbol{A}_k=(a_{k,ij})$ and $\boldsymbol{F}_k=(f_{k,ij})$ of order $(n\times n)$ for
$k=1,\ldots,m$. 
Based on the matrices $\boldsymbol{G}_k$s or $\boldsymbol{F}_k$s, it is possible to define $m$ suitable similarity matrices for each layer. 
Thus, we explore the aggregation of $m$ layers of the
multiplex network into a monoplex, i.e. we assess how to achieve a suitable
synthesis of the multiplex network. This problem is pursued by averaging in some appropriate way the $m$ similarity matrices corresponding to the layers,
which will be denoted as $\boldsymbol{S}_1,\ldots, \boldsymbol{S}_m$. Specifically, three proposals for a suitable average, which will be indicated as $\boldsymbol{S}_{+}$, of the $m$ similarity
matrices $\boldsymbol{S}_1,\ldots,\boldsymbol{S}_m\in\mathcal{C}\mathcal{P}_n$ will be considered.

The computation of the barycenter of a set of objects is generally based on the minimization of an appropriate objective
function. More precisely, $\boldsymbol{S}_{+}$ is obtained as
\begin{equation*}\label{eq:baric}
\boldsymbol{S}_{+}=\arg\min_{\boldsymbol{X}\in\mathcal{P}_n}\sum_{k=1}^m w_k d(\boldsymbol{X},%
\boldsymbol{S}_k)^2\text{ ,}
\end{equation*}
where $d:\mathcal{P}_n\times\mathcal{P}_n\rightarrow\mathbb{R}^{+}$ is a
metric on $\mathcal{P}_n$, while the known weights $w_1,\ldots,w_m$ are such
that $w_k\geq 0$ for $k=1,\ldots,m$ and $\sum_{k=1}^mw_k=1$. In a general
framework, $\boldsymbol{S}_{+}$ provides the so-called Fr\a'echet mean (see e.g. \cite{bacak2014computing}). However, the Fr\a'echet mean might be not suitable in an arbitrary metric space,
although it is highly appropriate in a geodesic metric space of non-positive
curvature -- a Hadamard space -- such as $\mathcal{P}_n$. The existence
and uniqueness of the minimizer $\boldsymbol{S}_{+}$ in a Hadamard space is assured by
Theorem 2.4 by \cite{bacak2014computing}. In the following subsections, we discuss the adoption of three appropriate metrics $d$, i.e. the Frobenius, the Riemannian and the
Wasserstein metrics, and the properties of the corresponding $\boldsymbol{S}_{+}$.

\subsection{Frobenius metric}\label{subsec::frob}
If the classical Frobenius metric is adopted, i.e.
\begin{equation*}\label{def:frobmetric}
d_{\text{F}}(\boldsymbol{X},\boldsymbol{Y})=\|\boldsymbol{X}\boldsymbol{-}\boldsymbol{Y}\|_{%
\text{F}}=\text{tr}((\boldsymbol{X}-\boldsymbol{Y})^{\text{T}}(\boldsymbol{X}-\boldsymbol{Y}))^{1/2}
\end{equation*}
for $\boldsymbol{X},\boldsymbol{Y}\in\mathcal{P}_n$, the weighted mean suggested in \cite{abdi2005distatis} is obtained
\begin{equation}\label{eq::frobeniusbar}
\boldsymbol{S}_{+\text{F}}=\sum_{k=1}^m w_k\boldsymbol{S}_k\text{ .}
\end{equation}
From $(i)$ of
Proposition 1, it holds that $\boldsymbol{S}_{+\text{F}}\in\mathcal{C}\mathcal{P}_n$, since  $\boldsymbol{S}_{+\text{F}}$ is a weighted sum of completely positive matrices
with positive weights. However, this proposal may involve a ``swelling
effect'' as a drawback, in the sense that $\boldsymbol{S}_{+\text{F}}$ may show an
increase in the determinant with respect to the single components of the mean \citep{fletcher2007riemannian}.

\subsection{Riemannian metric} \label{subsec::riemann}
Let us assume the Riemannian metric, i.e.
\begin{equation*}\label{def:riemandist}
\begin{aligned}
d_{\text{R}}(\boldsymbol{X},\boldsymbol{Y}){}
&=\|\log(\boldsymbol{X}^{-1/2}\boldsymbol{YX}^{-1/2})\|_{%
\text{F}}\\
&=\text{tr}(\log(\boldsymbol{X}^{-1/2}\boldsymbol{YX}^{-1/2})^{\text{T}}\log(%
\boldsymbol{X}^{-1/2}\boldsymbol{YX}^{-1/2}))^{1/2}
\end{aligned}
\end{equation*}
for $\boldsymbol{X},\boldsymbol{Y}\in\mathcal{P}_n$. \cite{bhatia2009positive} (Chapter 6, Theorem 6.1.6)
remarks that $d_{\text{R}}$ naturally arises in the framework of Riemannian
geometry (for further details, see also \cite{bhatia2019procrustes} and
references therein). Moreover, $d_{\text{R}}$ may be considered as the matrix
version of the Fisher-Rao metric for probability laws (see in turn Chapter 6
in \cite{bhatia2009positive}). In this case, the matrix $\boldsymbol{S}_{+\text{R}}$ exists and is the
unique solution of the nonlinear matrix equation in $\boldsymbol{X}$ given by 
\begin{equation}\label{eq:nonlinear}
\sum_{k=1}^mw_k\log(\boldsymbol{X}^{1/2}\boldsymbol{S}_k^{-1}\boldsymbol{X}^{1/2})=%
\boldsymbol{0}\text{ ,}
\end{equation}
even if it is not generally expressible in closed form (see e.g. \cite{lim2014weighteda,lim2014weighted}). 

In order to introduce the explicit expression of $\boldsymbol{S}_{+\text{R}}$ for $m=2$,
let us consider the weighted geometric mean of two matrices
$\boldsymbol{X},\boldsymbol{Y}\in\mathcal{P}_n$, i.e.
\begin{equation}\label{eq:rieman2}
\boldsymbol{X}\#_w\boldsymbol{Y}=\boldsymbol{X}(\boldsymbol{X}^{-1}\boldsymbol{Y})^w=\boldsymbol{X}^{1/2}(%
\boldsymbol{X}^{-1/2}\boldsymbol{YX}^{-1/2})^w\boldsymbol{X}^{1/2}\text{ ,}
\end{equation}
where $w\in[0,1]$ (for its properties, see \cite{bhatia2009positive}, and for its
efficient computation see \cite{iannazzo2016geometric}). Obviously, if $\boldsymbol{X}$ and $\boldsymbol{Y}$ were
scalars and $w = 1/2$, expression \eqref{eq:rieman2} reduces to the usual geometric
mean of two numbers. Moreover, if $\boldsymbol{X},\boldsymbol{Y}\in\mathcal{P}_n$ it follows that
$\boldsymbol{X}\#_w\boldsymbol{Y}\in\mathcal{P}_n$ \cite{lim2014weighteda}. Actually, for $m=2$ it
can be proven that the solution of the nonlinear matrix equation \eqref{eq:nonlinear} is given by
\begin{equation}\label{eq::closedriem}
\boldsymbol{S}_{+\text{R}}=\boldsymbol{S}_2\text{\#}_{w_1}\boldsymbol{S}_1\boldsymbol{=}%
\boldsymbol{S}_2^{1/2}(\boldsymbol{S}_2^{-1/2}\boldsymbol{S}_1\boldsymbol{S}_2^{-1/2})^{w_1}%
\boldsymbol{S}_2^{1/2}
\end{equation}
(see e.g. \cite{bhatia2009positive}, p.210). This is the reason for which
$\boldsymbol{S}_{+\text{R}}$ is named as the weighted geometric mean of positive
definite matrices also for a general $m$. It should be remarked that
$\boldsymbol{S}_{+\text{R}}= \boldsymbol{S}_1$ if $w_1=1$ and $\boldsymbol{S}_{+\text{R}}=\boldsymbol{S}_2$ if $w_1=0$.
Moreover, since $\boldsymbol{S}_{+\text{R}}$ constitutes a geodesic from $\boldsymbol{S}_1$ to $\boldsymbol{S}_2$
for varying $w_1\in[0,1]$ (see e.g. \cite{bhatia2009positive}, p.210) and
$\mathcal{C}\mathcal{P}_n$ is a closed convex cone in $\mathcal{P}_n$, then
$\boldsymbol{S}_{+\text{R}}\in\mathcal{C}\mathcal{P}_n$ for $m=2$.

If $m\geq 3$, \cite{lim2014weighteda} propose an iterative procedure for
computing $\boldsymbol{S}_{+\text{R}}$. Let us define the iterative sequence 
\begin{equation}\label{eq::induction_Riem}
\boldsymbol{X}_{k+1}=\boldsymbol{X}_k \#_{t_{k+1}}\boldsymbol{S}_{j_{k+1}}
\end{equation}
for $k=1,2,\ldots,$ where $\boldsymbol{X}_1 = \boldsymbol{S}_{j_1}$ and $t_k=w_{j_k}/\sum_{i=1}^kw_{j_i}$.
%\begin{equation}\label{eq::induction_Riem}
%    \begin{cases}
%\boldsymbol{X}_1 = \boldsymbol{S}_{j_1} \\
%\boldsymbol{X}_{k+1}=\boldsymbol{X}_k \#_{t_{k+1}} %\boldsymbol{S}_{j_{k+1}} & k=1,2,\ldots,k_{max}
%\end{cases}
%\end{equation}
%X$_1=$S$_{j_1}$ and
%X$_{k+1}=$X$_k$\#$_{t_{k+1}}$S$_{j_{k+1}}$ for $k=1,2,\ldots$, 
In addition, $j_k=(k\bmod m)$ and null residuals are identified with $m$, i.e. $j_{im}=m$ for
$i=1,2,\ldots$. Accordingly, \cite{lim2014weighteda} prove that $\lim_k\boldsymbol{X}_k=\boldsymbol{S}_{+\text{R}}$. In \cite{massart2018matrix}, an alternative choice leading to a more efficient computation of the index $j_k$ is provided. Since at each step
$\boldsymbol{X}_k\in\mathcal{C}\mathcal{P}_n$, it also follows that
$\boldsymbol{S}_{+\text{R}}\in\mathcal{C}\mathcal{P}_n$. Moreover,
$\boldsymbol{S}_{+\text{R}}\preceq \boldsymbol{S}_{+\text{F}}$ holds on the basis of the generalized
arithmetic-geometric-harmonic mean inequality (see \cite{lim2014weighted}),
a result which emphasizes that $\boldsymbol{S}_{+\text{R}}$ may be less prone than
$\boldsymbol{S}_{+\text{F}}$ to the swelling effect.

\subsection{Wasserstein metric}\label{subsec::wass}
A further proposal for $d$ is given by the
Wasserstein metric, i.e.
\begin{equation*}\label{def:wdist}
d_{\text{W}}(\boldsymbol{X},\boldsymbol{Y})=\text{tr}(\boldsymbol{X}+\boldsymbol{Y}-2(\boldsymbol{X}^{1/2}%
\boldsymbol{YX}^{1/2})^{1/2})^{1/2}
\end{equation*}
for $\boldsymbol{X},\boldsymbol{Y}\in\mathcal{P}_n$ (for more details, see e.g. \cite{bhatia2019bures}).
\cite{bhatia2019bures} emphasize that $d_{\text{W}}$ displays many interesting
features and, among others, it corresponds to a metric in Riemannian
geometry. \cite{alvarez2016fixed} prove that $\boldsymbol{S}_{+\text{W}}$ is
provided by the unique solution of the nonlinear matrix equation in $\boldsymbol{X}$ given by 
\begin{equation}\label{eq:nonlinear2}
\boldsymbol{X}=\sum_{k=1}^mw_k(\boldsymbol{X}^{1/2}\boldsymbol{S}_k\boldsymbol{X}^{1/2})^{1/2}
\end{equation}
and $\boldsymbol{S}_{+\text{W}}\in\mathcal{P}_n$ (see also \cite{bhatia2019bures}). The
solution of the nonlinear matrix equation \eqref{eq:nonlinear2} is solely known in a
closed form for $m=2$ and it reads
\begin{equation}\label{eq::closedwass}
\boldsymbol{S}_{+\text{W}}=w_1^2\boldsymbol{S}_1+w_2^2\boldsymbol{S}_2+w_1w_2((\boldsymbol{S}_1%
\boldsymbol{S}_2)^{1/2}+(\boldsymbol{S}_2\boldsymbol{S}_1)^{1/2})
\end{equation}
(see \cite{bhatia2019bures}). Incidentally, it is interesting to remark that, if
$\boldsymbol{S}_1$ and $\boldsymbol{S}_2$ were scalars and $w_1 = w_2 = 1/2$, expression \eqref{eq::closedwass}
actually provides the average of the usual arithmetic and geometric means of
two numbers. Similarly to the case of the weighted geometric mean, it holds
that $\boldsymbol{S}_{+\text{W}}=\boldsymbol{S}_1$ if $w_1=1$ and $\boldsymbol{S}_{+\text{R}}=\boldsymbol{S}_2$ if $w_1=0$,
while $\boldsymbol{S}_{+\text{W}}$ is a geodesic from $\boldsymbol{S}_1$ to $\boldsymbol{S}_2$ for varying
$w_1\in[0,1]$ \citep{bhatia2019bures} and hence
$\boldsymbol{S}_{+\text{R}}\in\mathcal{C}\mathcal{P}_n$ for $m=2$. 

In order to manage the case $m\geq 3$ by adopting an algorithm based on a
fixed-point iteration method, \cite{alvarez2016fixed} suggest to
consider the matrix function $\boldsymbol{K}:\mathcal{P}_n\rightarrow\mathcal{P}_n$ such
that
\begin{equation}\label{eq::update_Wass}
\boldsymbol{K}(\boldsymbol{X})=\boldsymbol{X}^{-1/2}\left(\sum_{i=1}^m w_i(\boldsymbol{X}^{1/2}%
\boldsymbol{S}_i\boldsymbol{X}^{1/2})^{1/2}\right)^2\boldsymbol{X}^{-1/2}\text{ .}
\end{equation}
\cite{alvarez2016fixed} highlight that $\boldsymbol{K}(\boldsymbol{X})\in\mathcal{P}_n$ if
$\boldsymbol{X}\in\mathcal{P}_n$. Hence, by assuming that $\boldsymbol{X}_{k+1}=\boldsymbol{K}(\boldsymbol{X}_k)$ for
$k=0,1,\ldots$ and for each $\boldsymbol{X}_0\in\mathcal{P}_n$, \cite{alvarez2016fixed} prove that $\lim_k\boldsymbol{X}_k=\boldsymbol{S}_{+\text{W}}$. In addition, by means of a
numerical study, they remark that algorithm
convergence is fast, even for rather large $n$ and $m$. \ It is worth
noticing that $\boldsymbol{S}_{+\text{W}}\preceq\boldsymbol{S}_{+\text{F}}$ on the basis of Theorem
9 by \cite{bhatia2009positive}, which is a suitable property, as previously
explained.
\\\\
\textbf{Remark.} The use of the metrics $d_{\text{R}}$
and $d_{\text{W}}$ involves an optimal rotation of each couple of similarity
matrices, i.e. they actually provide the minimum for the orthogonal
Procrustes problem (see comment to Theorem 1 in \cite{bhatia2009positive} and
\cite{bhatia2019procrustes}). Moreover, we also remark that 
$\boldsymbol{S}_{+\text{R}}$ and $\boldsymbol{S}_{+\text{W}}$ have to be appropriately re-normalized in
order to achieve a proper normalized similarity matrix having all values equal to $1$ on the diagonal. A suitable way to achieve a normalized similarity matrix $\boldsymbol{S^*}=(s_{ij}^*)$ consists in modifying the generic element $s_{ij}$ of a similarity matrix $\boldsymbol{S}=(s_{ij})$ as follows
\begin{equation*}
    \label{eq:normalizationsimmat}
    s_{ij}^* = \frac{s_{ij}}{(s_{ii}s_{jj})^{1/2}} \text{ .}
\end{equation*} 
\\
\section{Choice of the weights}\label{sec::weightchoice} 
In order to select the weights
$\boldsymbol{w}=(w_1,\ldots,w_m)^{\text{T}}$, a measure of `closeness' between
the couples of the $m$ similarity matrices $\boldsymbol{S}_1,\ldots,\boldsymbol{S}_m$ is needed at
first. A suitable such a measure is given by the RV coefficient
proposed by \cite{robert1976unifying}, which is widely adopted in many
different practical analyses. Hence, the matrix $\boldsymbol{R} =(r_{ij})$ of order
$(m\times m)$ may be considered, where $r_{ij}$ represents the RV coefficient
between $\boldsymbol{S}_i$ and $\boldsymbol{S}_j$, i.e.
\begin{equation*}
r_{ij}=\frac{\langle\boldsymbol{S}_i,\boldsymbol{S}_j\rangle_{\text{F}}}{\|\boldsymbol{S}_i\|_{%
\text{F}}\|\boldsymbol{S}_j\|_{\text{F}}}\text{ .}
\end{equation*}
It holds that $r_{ij}\in[0,1]$ and the closeness between $\boldsymbol{S}_i$ and $\boldsymbol{S}_j$
increases as $r_{ij}$ approaches one. In addition, it should be remarked that
$\boldsymbol{R}\in\mathcal{C}\mathcal{P}_m$. Indeed, this issue follows from $(iii)$ of
Proposition 1, since $\boldsymbol{R}$ may be expressed as
$\boldsymbol{R} = \text{diag}(\boldsymbol{U})^{-1/2} \boldsymbol{U} \text{diag}(\boldsymbol{U})^{-1/2}$ where $\boldsymbol{U}=\boldsymbol{V}^{\text{T}}\boldsymbol{V}$ and
$\boldsymbol{V}=($vec$(\boldsymbol{S}_1),\ldots,$vec$(\boldsymbol{S}_m))$, which also provides a computationally
efficient expression for $\boldsymbol{R}$.

When $\boldsymbol{S}_{+\text{F}}$ is adopted, \cite{abdi2005distatis} suggest to consider the
eigendecomposition of $\boldsymbol{R}$. In such a case, the decomposition gives rise to $\boldsymbol{R}=\boldsymbol{Q}\boldsymbol{\Lambda}\boldsymbol{Q}^{\text{T}}$, where
$\boldsymbol{Q}=(\boldsymbol{q}_1,\ldots,\boldsymbol{q}_m)$ is the orthogonal matrix whose columns are the
eigenvectors of $\boldsymbol{R}$, while
$\boldsymbol{\Lambda}=$diag$(\lambda_1(\boldsymbol{R}),\ldots,\lambda_m(\boldsymbol{R}))$ is the
diagonal matrix whose entries are the positive eigenvalues of $\boldsymbol{R}$ considered in
nonincreasing order. The Perron-Frobenius theorem (see e.g. \cite{berman1994nonnegative}) ensures that the eigenvector $\boldsymbol{q}_1$ corresponding to the
largest eigenvalue $\lambda_1(\boldsymbol{R})$ has nonnegative elements. Thus, \cite{abdi2005distatis} propose the choice 
\begin{equation*}
\boldsymbol{w}=\frac{1}{\boldsymbol{1}^{\text{T}}\boldsymbol{q}_1}\,\boldsymbol{q}_1\text{ .}
\end{equation*}
In practice, a principal component analysis is considered on $\boldsymbol{R}$ and the first
eigenvector is used for implementing $\boldsymbol{w}$. Hence, since layers with larger
projections on $\boldsymbol{q}_1$ are more similar to the other layers than the layers
with smaller projections, the (rescaled) elements of this eigenvector should
provide suitable weights for $\boldsymbol{S}_{+\text{F}}$, which is a linear combination
of $\boldsymbol{S}_1,\ldots,\boldsymbol{S}_m$. Similarly to principal component analysis, the
goodness of this selection for $\boldsymbol{w}$ may be assessed by means of the ratio
$\lambda_1(\boldsymbol{R})/\sum_{k=1}^m\lambda_k(\boldsymbol{R})$ (see \cite{abdi2005distatis}).

In the case of $\boldsymbol{S}_{+\text{R}}$ and $\boldsymbol{S}_{+\text{W}}$ it is not obvious if the
previous choice of $\boldsymbol{w}$ is suitable, since these averages are not linear
functions of $\boldsymbol{S}_1,\ldots,\boldsymbol{S}_m$. Alternatively, we suggest the choice
\begin{equation*}
\boldsymbol{w}=\frac{1}{\boldsymbol{1}^{\text{T}}(\boldsymbol{R}-\boldsymbol{I}_m)%
\boldsymbol{1}}\,(\boldsymbol{R}-\boldsymbol{I}_m)\boldsymbol{1} \text{ .}
\end{equation*}
The rationale behind this proposal stems on the fact that
the more a similarity matrix is close to the others, the more is
representative of the whole set of similarity matrices -- and hence it should
receive a larger weight with respect to the others. Thus, this proposal
assigns weights according to the ratio of the RV coefficient sum corresponding
to a similarity matrix to the total of the RV coefficients.

\section{Implementation of algorithms for computing averages}\label{sec::implementation}

The algorithms for the computation of the similarity matrix averages $\boldsymbol{S}_{+\text{F}}$, $\boldsymbol{S}_{+\text{W}}$ and $\boldsymbol{S}_{+\text{R}}$ were implemented using the Python programming language. All the Python functions are publicly available on GitHub at \url{https://github.com/DedeBac/SimilarityMatrixAggregation.git}.

The choice of the Frobenius metric leads to a similarity matrix average which is actually a weighted arithmetic mean of matrices -- see expression \eqref{eq::frobeniusbar}. Hence, the practical implementation of $\boldsymbol{S}_{+\text{F}}$ is trivial. In contrast, as discussed in Subsections \ref{subsec::riemann} and \ref{subsec::wass}, the choice of the Riemannian and the Wasserstein metrics requires that $\boldsymbol{S}_{+\text{R}}$ and $\boldsymbol{S}_{+\text{W}}$ are computed via iterative algorithms, unless there are solely two input matrices. Thus, in the special case $m=2$ (discussed in Subsection \ref{subsec::riemann} and \ref{subsec::wass}), the closed-form solutions available in expressions \eqref{eq::closedriem} and \eqref{eq::closedwass} were implemented. In the non-trivial case $m>2$, $\boldsymbol{S}_{+\text{R}}$ is computed using the iterative sequence defined in expression \eqref{eq::induction_Riem}. In turn, such an implementation is based on the optimized procedure proposed by \cite{massart2018matrix}, as introduced in Subsection \ref{subsec::riemann}. A key issue of the iterative step consists in the computation of the weighted geometric mean of two matrices. In our algorithm, the weighted geometric mean was implemented by using the proposal by \cite{iannazzo2016geometric} (Algorithm $3.1$). Specifically, this implementation adopts the Schur decomposition and the Cholesky factorization in order to simplify the computation of the power of a matrix. In the case of the the Wasserstein metric, $\boldsymbol{S}_{+\text{W}}$ is computed by means of the fixed-point iteration method proposed by \cite{alvarez2016fixed} and discussed in Subsection \ref{subsec::wass}, see equation \eqref{eq::update_Wass}. Both these iterative procedures stop when either a maximum number of iterations is reached, or when the relative distance between intermediate solutions computed at consecutive iterations reaches the tolerance value set by the user. 
Finally, the implementation of the weights computation follows the proposals discussed in Section \ref{sec::weightchoice}.

We provide a visual example of the computation of the matrix averages for each metric choice in the very special case $n=2$ and $m=2$. In such a setting, the results given in the previous sections may be conveniently
-- and instructively -- visualized by means of ellipses of type
$\{\boldsymbol{x}\in\mathbb{R}^2:\boldsymbol{x}^{\text{T}}\boldsymbol{Sx}=1\}$. Thus, for $w_1=w_2=1/2$, Figures \ref{fig:toyex1}, \ref{fig:toyex2} and \ref{fig:toyex3} provide the graphical representation of
$\boldsymbol{S}_{+\text{F}}$, $\boldsymbol{S}_{+\text{R}}$ and $\boldsymbol{S}_{+\text{W}}$ for several choices of
$\boldsymbol{S}_1$ and $\boldsymbol{S}_2$. From these Figures, it is apparent that $\boldsymbol{S}_{+\text{R}}$ and
$\boldsymbol{S}_{+\text{W}}$ are less affected by the swelling effect (see especially
Figure \ref{fig:toyex3}). Moreover, even if $\boldsymbol{S}_{+\text{F}}$, $\boldsymbol{S}_{+\text{R}}$ and
$\boldsymbol{S}_{+\text{W}}$ have rather similar ``shapes'' in Figures \ref{fig:toyex1} and \ref{fig:toyex2}, these
``shapes'' substantially differ in Figure \ref{fig:toyex3}.

\begin{figure}[!ht]
    \centering
    \includegraphics[width=0.95\textwidth]{./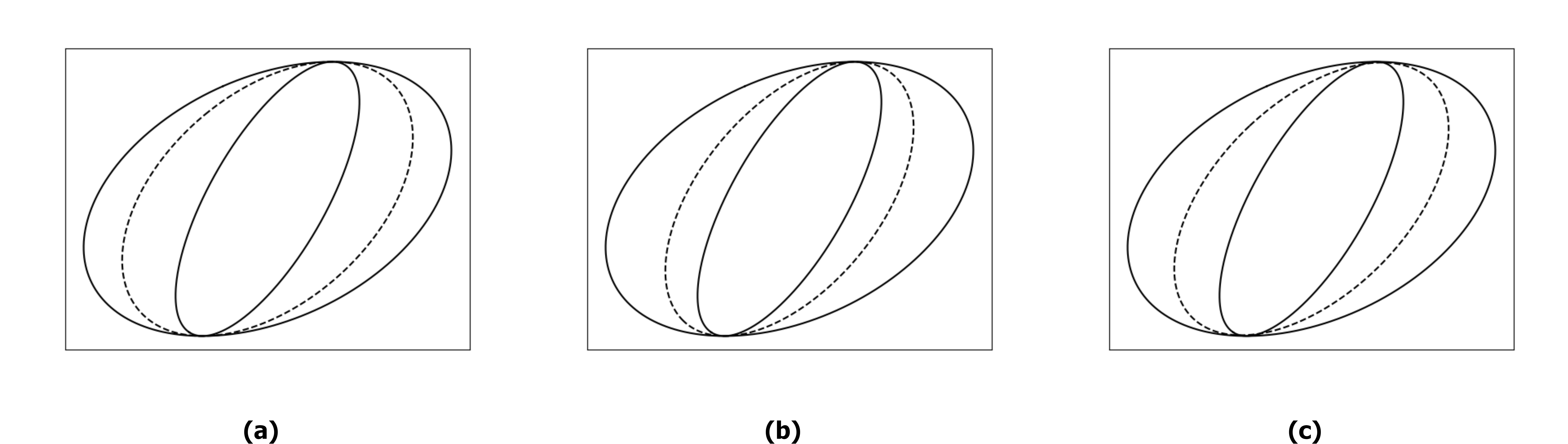}
    \caption{Visualization of (a) $\boldsymbol{S}_{+\text{F}}$, (b) $\boldsymbol{S}_{+\text{R}}$  and (c) $\boldsymbol{S}_{+\text{W}}$  (dashed style) for $\boldsymbol{S}_1=
\begin{pmatrix}
1&1\\
1&2
\end{pmatrix}
$ and $\boldsymbol{S}_2=
\begin{pmatrix}
4&1\\
1&2
\end{pmatrix}
$. Continuous lines correspond to the ellipses individuated by $\boldsymbol{S}_1$ and $\boldsymbol{S}_2$.}
    \label{fig:toyex1}
\end{figure}

\begin{figure}[!ht]
    \centering
    \includegraphics[width=0.95\textwidth]{./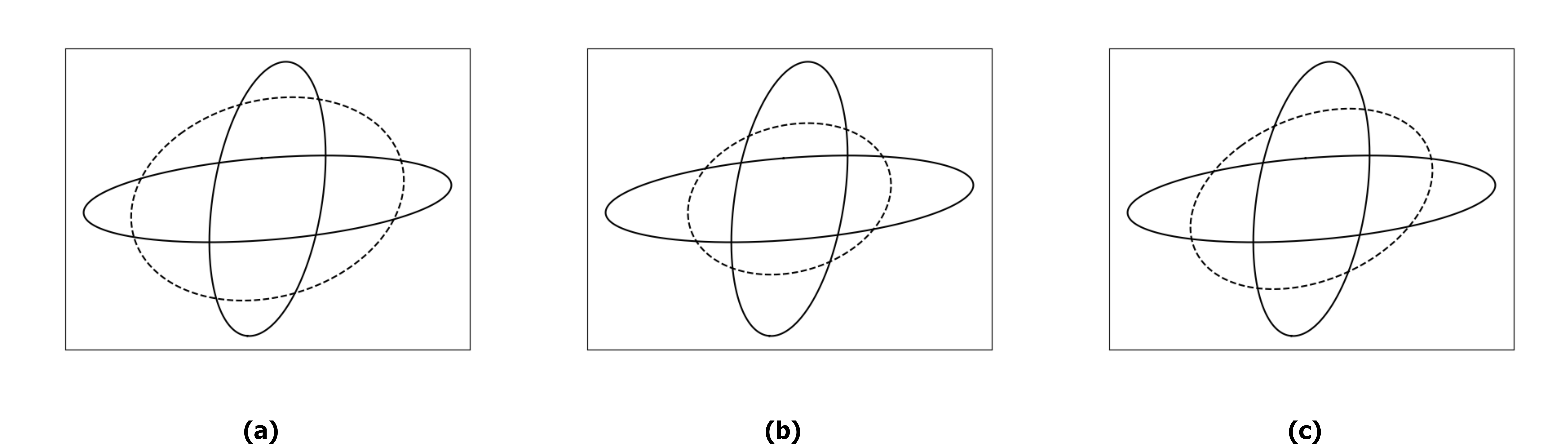}
    \caption{Visualization of (a) $\boldsymbol{S}_{+\text{F}}$, (b) $\boldsymbol{S}_{+\text{R}}$ and (c) $\boldsymbol{S}_{+\text{W}}$ (dashed style) for $\boldsymbol{S}_1=
\begin{pmatrix}
1&1\\
1&10
\end{pmatrix}
$ and $\boldsymbol{S}_2=
\begin{pmatrix}
10&1\\
1&1
\end{pmatrix}
$. Continuous lines correspond to the ellipses individuated by $\boldsymbol{S}_1$ and $\boldsymbol{S}_2$.}
    \label{fig:toyex2}
\end{figure}

\begin{figure}[!ht]
    \centering
    \includegraphics[width=0.95\textwidth]{./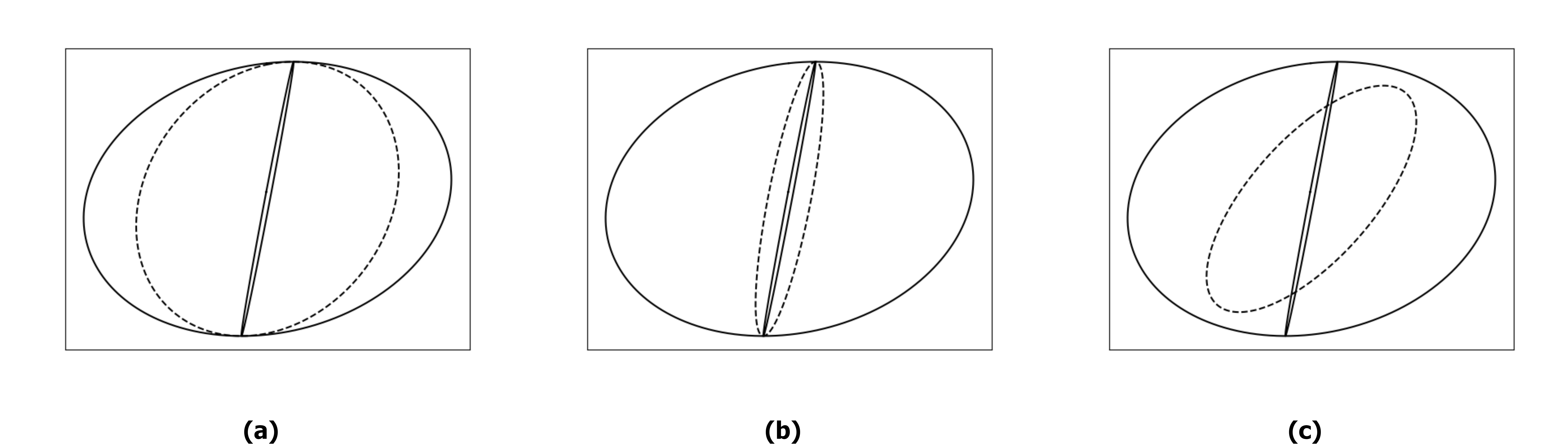}
    \caption{Visualization of (a) $\boldsymbol{S}_{+\text{F}}$, (b) $\boldsymbol{S}_{+\text{R}}$ and (c) $\boldsymbol{S}_{+\text{W}}$ (dashed style) for $\boldsymbol{S}_1=
\begin{pmatrix}
50&1\\
1&1
\end{pmatrix}
$ and $\boldsymbol{S}_2=
\begin{pmatrix}
1.01&1\\
1&1
\end{pmatrix}
$. Continuous lines correspond to the ellipses individuated by $\boldsymbol{S}_1$ and $\boldsymbol{S}_2$.}
    \label{fig:toyex3}
\end{figure}

\section{Case study: multiplex network of statistical journals}\label{sec::journals}

In this Section we present an application of the use of matrix averages to multiplex network of statistical journals. %qui inizia la parte di EP 
Specifically, we will consider 79 journals belonging to the category ``Statistics and Probability" of the \textit{Journal Citation Reports} in the year 2006. Each journal was associated with three different types of ``items" in order to obtain the multiplex network: the editors sitting on the editorial boards, the authors that published on the journals and the journals cited in the bibliographies of the corresponding articles. This kind of connections have a key relevance in a scientometric setting (see \cite{baccini2020intellectual}, \cite{baccini2022similarity} and references therein).

Editor data include the editors of the journals in the year 2006 and were collected by \cite{baccini_eds_data_2009}. On the other hand, data on authors and cited journals were extracted from Clarivate Analytics Web of Science (WoS) database (\url{https://www.webofscience.com/}). Queries were made through the SQL relational database system hosted by the Centre for Science and Technology Studies (CWTS) at Leiden University, using the 2021 version of Web of Science. In both cases, authorships and cited journals were solely collected for publications in the five-year period 2006-2010. Table \ref{table::desc_stats} offers an overview of the considered dataset.

\begin{table}[!t]
\caption{Main features of the considered scientometric dataset.}
\
\centering
\scriptsize

{}
\begin{adjustbox}{max width=\textwidth}
\begin{tabular}{cccc}
\hline 
Layer & Period of observation & Two-mode network & Relation in the one-mode network \\
\hline
Editors & 2006 & 79 journals $\times$ 2,227 editors & Interlocking editorship\\
Authors & 2006-2010 & 79 journals $\times$ 38,683 authors & Interlocking authorship\\
Cited journals & 2006-2010 & 79 journals $\times$ 7,528 cited journals & Journal bibliographic coupling \\
\hline
\end{tabular} 
\end{adjustbox}
\label{table::desc_stats}
\end{table}

Before proceeding to the computation of averages, the journal \textit{Communications in Statistics - Simulation and Computation} was removed from the dataset since it had exactly the same editors of the journal \textit{Communications in Statistics - Theory and Methods}. In fact, Taylor \& Francis, the publisher of the two journals, considers them as ``associated journals". Thus, in order to avoid ambiguity we solely retained \textit{Communications in Statistics - Theory and Methods}.

The resulting one-mode networks respectively model the pairwise connections between journals based on common editors (interlocking editorship network), common authors (interlocking authorship network) and common cited journals (journal bibliographic coupling network). To the aim of computing the analyzed averages, the cosine similarity defined in equation \eqref{eq::cossim} was considered for each layer in order to obtain three similarity matrices. 
The averages $\boldsymbol{S}_{+\text{F}}$, $\boldsymbol{S}_{+\text{R}}$ and $\boldsymbol{S}_{+\text{W}}$ of the three similarity matrices were computed using the implementation described in Section \ref{sec::implementation}.
Figure \ref{fig::residuals} displays the residual between matrices computed at consecutive iterations in the case of $\boldsymbol{S}_{+\text{R}}$ (Fig. \ref{fig::residuals} (a)) and $\boldsymbol{S}_{+\text{W}}$ (Fig. \ref{fig::residuals} (b)). It should be remarked that both algorithms converge to a solution after few iterative steps.

Figures \ref{fig::bar_frob}-\ref{fig::bar_wass} respectively display the networks corresponding to the averages $\boldsymbol{S}_{+\text{F}}$, $\boldsymbol{S}_{+\text{R}}$ and $\boldsymbol{S}_{+\text{W}}$. The networks were plotted by means of the Gephi software \citep{gephi} and by using the ForceAtlas2 visualization algorithm \citep{jacomy2014forceatlas2}.
Different colours correspond to distinct communities computed with the Louvain algorithm \citep{blondel2008fast} by setting the resolution parameter to $1$ \citep{lambiotte2008laplacian}. Actually, the Louvain algorithm is commonly adopted in order to detect communities in a network. The modularity score obtained was $0.131$ for $\boldsymbol{S}_{+\text{F}}$, $0.222$ for $\boldsymbol{S}_{+\text{R}}$, and $0.166$ for $\boldsymbol{S}_{+\text{W}}$, as reported in Table \ref{table::communities}.
The Louvain algorithm partitions the three aggregated networks into five communities. All the networks are characterized by three large communities, indicated by the pink, orange and green nodes in Figures \ref{fig::bar_frob}--\ref{fig::bar_wass}, and reported in Table \ref{table::communities} as Community 1, Community 2 and Community 3, respectively. Specifically, Community 1 (pink nodes) gathers journals in the field of statistical methodology, Community 2 (orange nodes) groups journals mainly devoted to probability theory and its applications, while Community 3 gathers journals from the field of applied statistics.  
Finally, two much smaller communities are individuated - Community 4 (blue nodes) and Community 5 (yellow nodes) in Table \ref{table::communities}. They both gather journals that are at the boundary with the fields of Economics (Community 4) and Bioinformatics (Community 5), respectively.

Overall, the modularity values and the communities detected in the three aggregated networks are very similar. This issue indicates that there are not structurally significant difference between the three averages in the present application. 
Interestingly, the resulting communities are coherent with those obtained by \cite{baccini2020intellectual} and \cite{baccini2022similarity} by adopting different methodologies. 

\begin{table}[!t]
\caption{Number of nodes for each community individuated by the Louvain modularity optimization algorithm and for each average $\boldsymbol{S}_{+\text{F}}$, $\boldsymbol{S}_{+\text{R}}$ and $\boldsymbol{S}_{+\text{W}}$. The first column reports the modularity score associated to each partitioning of the networks.}
\
\centering
\scriptsize

{}
\begin{adjustbox}{max width=\textwidth}
\begin{tabular}{ccccccc}
\hline 

Average & Modularity & Community 1 & Community 2 & Community 3 & Community 4 & Community 5 \\
\hline
$\boldsymbol{S}_{+\text{F}}$ & $0.131$ & $27$ & $22$ & $21$ & $4$ & $4$ \\ 
\
$\boldsymbol{S}_{+\text{R}}$ & $0.222$ & $29$ & $20$ & $20$ & $5$ & $4$ \\ 
\
$\boldsymbol{S}_{+\text{W}}$ & $0.166$ & $29$ & $21$ & $16$ & $8$ & $4$ \\ 
\hline
\end{tabular}
\end{adjustbox}
\label{table::communities}
\end{table}

\begin{figure}
%\begin{subfigure}[t]{0.48\textwidth}
\includegraphics[width=\textwidth]{./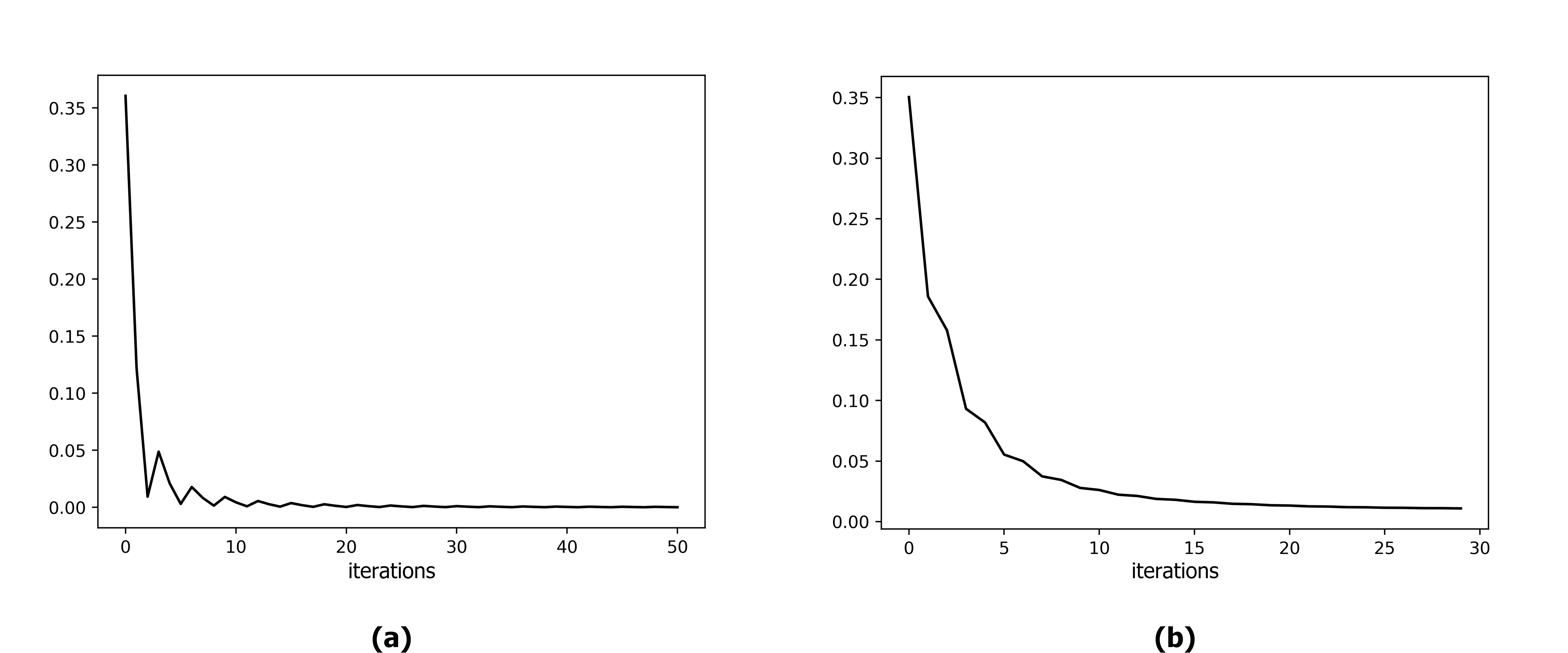}
    \caption{Residual values of the algorithm iterative steps for the computation of $\boldsymbol{S}_{+\text{W}}$ for $50$ iterations (a) and $\boldsymbol{S}_{+\text{W}}$ for $30$ iterations (b).}
    \label{fig::residuals}
\end{figure}
\newpage
\begin{sidewaysfigure}
    \centering
    \includegraphics[width=\textwidth]{./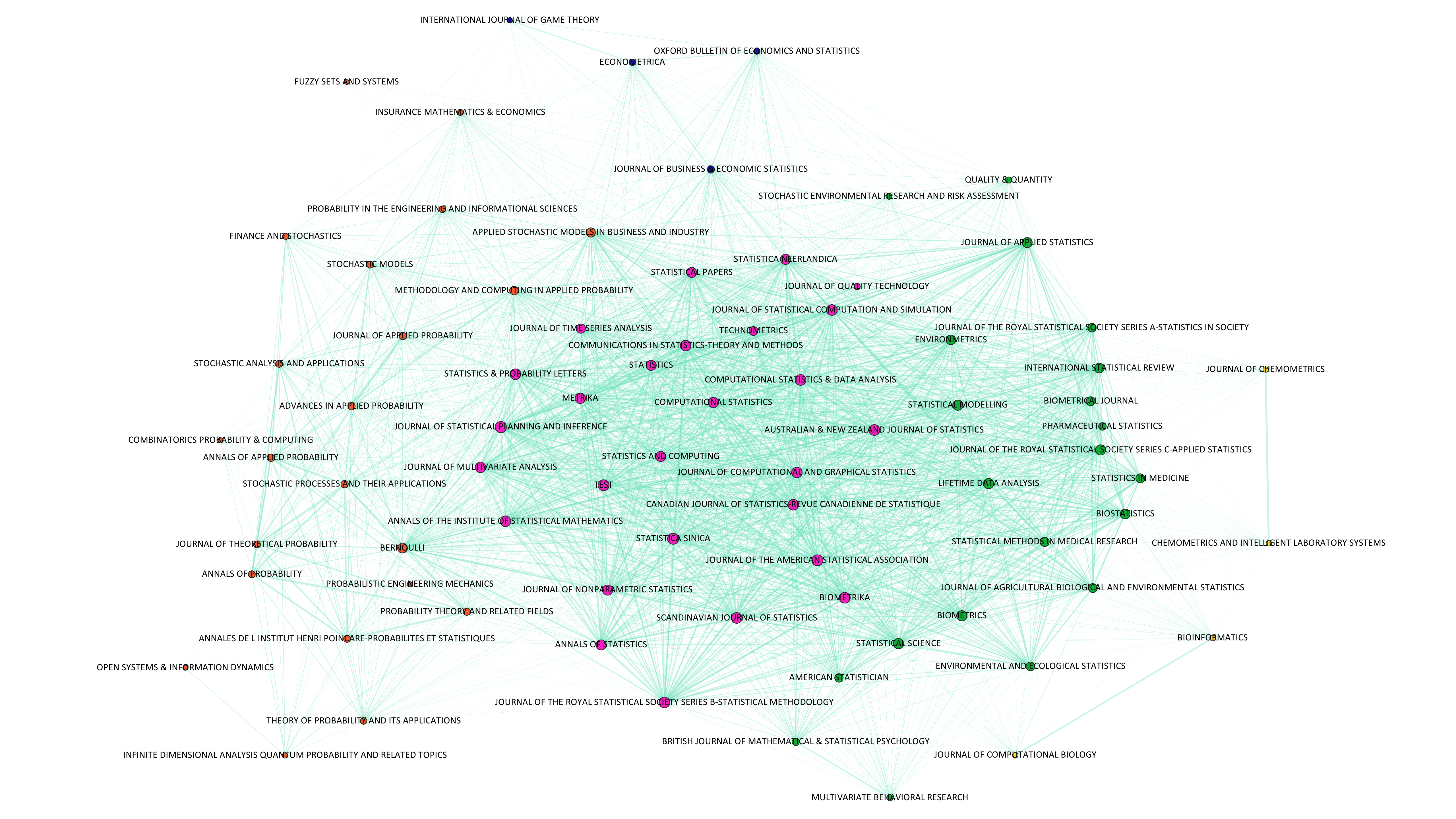}
    \caption{Visualization of the average of the journal multiplex network with the choice of the Frobenius metric. Different node colours indicate distinct communities individuated by using the Louvain optimization algorithm. The node size is proportional to its weighted degree.}
    \label{fig::bar_frob}
\end{sidewaysfigure}

\begin{sidewaysfigure}
    \centering
    \includegraphics[width=\textwidth]{./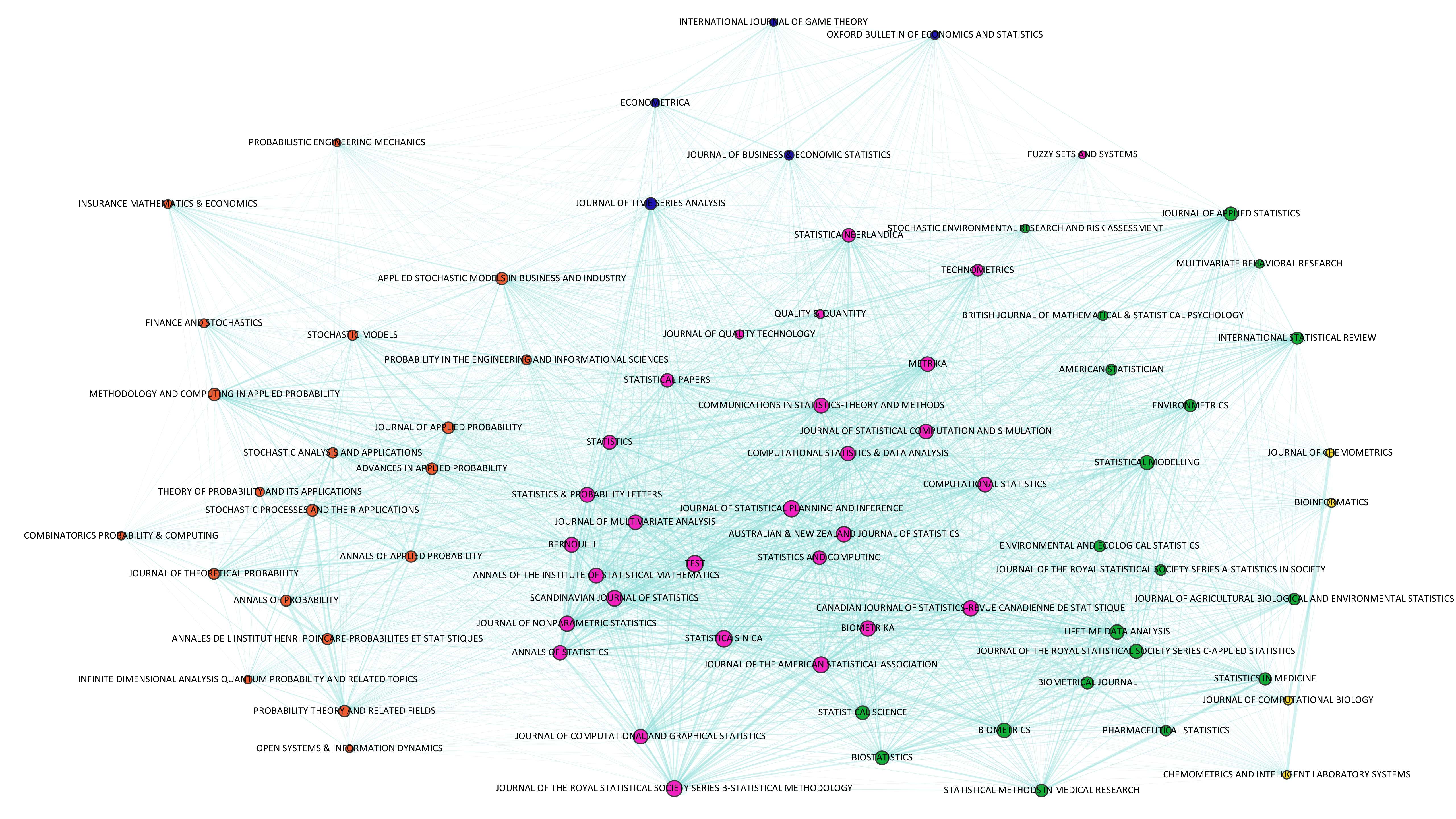}
    \caption{Visualization of the average of the jounal multiplex network with the choice of the Riemannian metric. Different node colours indicate distinct communities individuated by using the Louvain optimization algorithm. The node size is proportional to its weighted degree.}
    \label{fig::bar_riem}
\end{sidewaysfigure}

\begin{sidewaysfigure}
    \centering
    \includegraphics[width=\textwidth]{./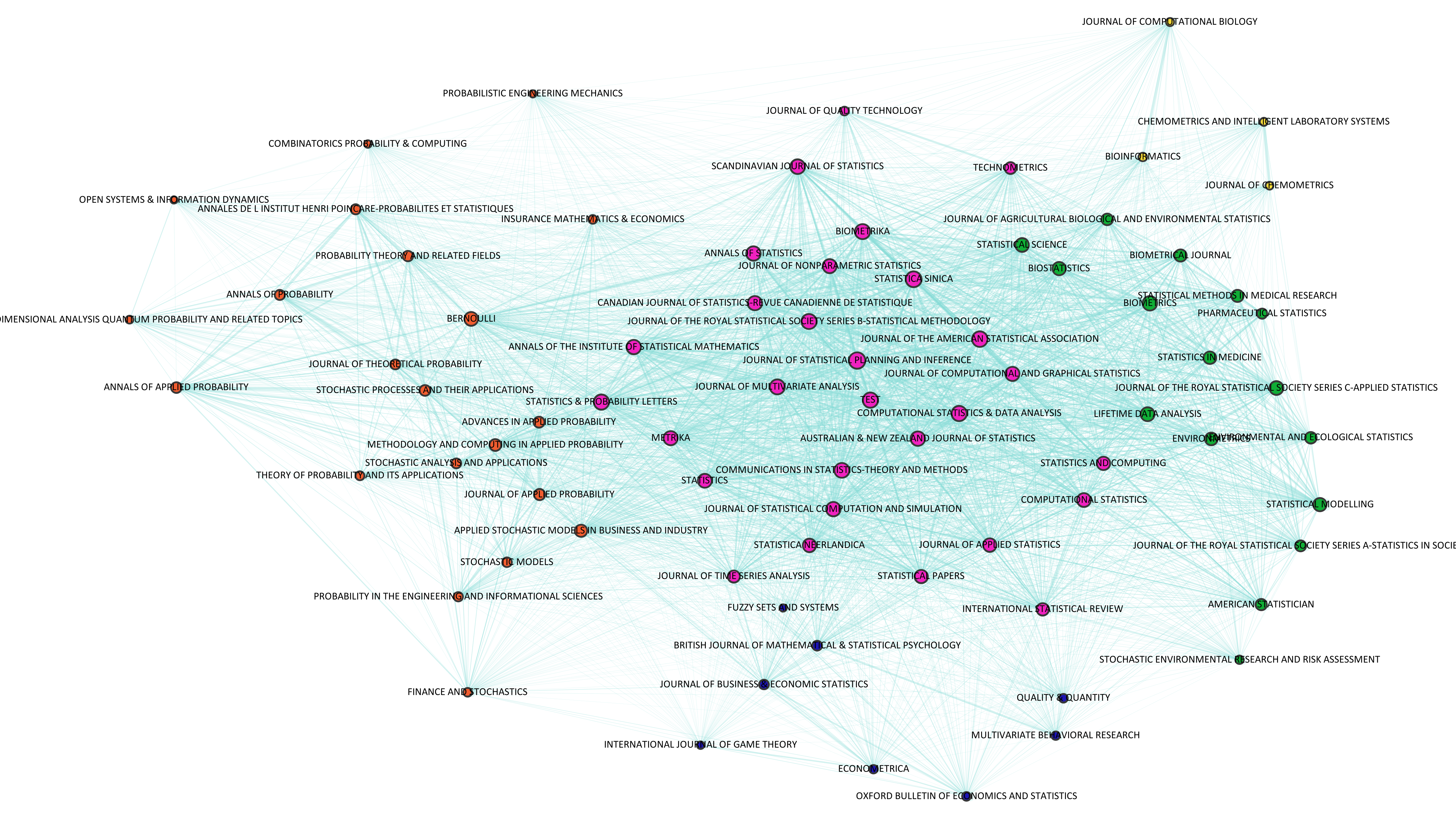}
    \caption{Visualization of the average of the journal multiplex network with the choice of the Wasserstein metric. Different node colours indicate distinct communities individuated by using the Louvain optimization algorithm. The node size is proportional to its weighted degree.}
    \label{fig::bar_wass}
\end{sidewaysfigure}

\section{Conclusions}\label{sec::conclusion}

This work proposes the use of similarity matrix average to aggregate multiplex networks by using some concepts in the Riemannian geometry formulation of barycenter. In particular, we show that some commonly-adopted similarity measures allow to obtain similarity matrices which belong to the space of completely positive matrices by starting from two-mode bipartite networks. 
These types of similarity matrices are averaged by considering the minimization of a Fr\a'echet mean criterion with the Frobenius, the Riemannian and the Wasserstein metric choices. The results obtained on the multiplex network of statistical journals highlight the availability of the methodology.  
The proposed method constitutes an advance in the topic of multiplex network aggregation, since it provides a theoretically justified framework -- along with its implementation -- to combine the contribution of different relations which exist among a set of nodes. This methodology might be of interest in several application fields, such as social network analysis and bioinformatics, where relations of different nature have to be explored and integrated to determine some structural organization of a set of entities. Indeed, the aggregation allows to perform the required task -- such as the detection of communities -- on a single network, rather than on a multiplex network, in order to reduce the complexity of the problem.

\backmatter

\newpage
\bmhead{Acknowledgments}

Eugenio Petrovich is grateful to the Centre for Science and Technology Studies (CWTS) at Leiden University for hosting him as a guest researcher and giving him access to the CWTS database system.
Lucio Barabesi gratefully acknowledges the funding from the Italian Ministry of University, PRIN project: 2017MPXW98.
\bmhead{Declarations}
The authors have no relevant financial or non-financial interests to disclose.

\bibliography{sn-bibliography}% common bib file
%% if required, the content of .bbl file can be included here once bbl is generated
%%\input sn-article.bbl

%% Default %%
%%\input sn-sample-bib.tex%

\end{document}